\newcommand*{\algrule}[1][\algorithmicindent]{\makebox[#1][l]{\hspace*{.5em}
\vrule height .75\baselineskip depth .25\baselineskip}}%
\def\ALG@printindent{%
    \ifnum \theALG@nested>0
        \ifx\ALG@text\ALG@x@notext
            \addvspace{-3pt}
        \else
            \unskip
            \ALG@printindent@tempcnta=1
            \loop
                \algrule[\csname 
ALG@ind@\the\ALG@printindent@tempcnta\endcsname]%
                \advance \ALG@printindent@tempcnta 1
            \ifnum \ALG@printindent@tempcnta<\numexpr\theALG@nested+1\relax%
            \repeat
        \fi
    \fi
    }%
\patchcmd{\ALG@doentity}{\noindent\hskip\ALG@tlm}{\ALG@printindent}{}{
\errmessage{failed to patch}}
 \tikzstyle{node}=[draw]
 \tikzstyle{tikz}=[draw]
\newcommand{\picc}[1]{\Pi_{cc}^{||}(#1)}
\newcommand{\rsg}{\mathsf{RSG}_{\lambda}^k}
\newcommand{\etc}{\textit{etc.}\xspace}
\newcommand{\ie}{\textit{i.e.,}\xspace}
\newcommand{\eg}{\textit{e.g.,}\xspace}
\newcommand{\GG}{{G}}
\newcommand{\E}[0]{\mathbb{E}}
\newcommand{\x}[0]{\mathbf{x}}
\newcommand{\X}[0]{\mathbf{X}}
\newcommand{\tuple}[1]{\left \langle #1 \right \rangle}
\definecolor{ltgray}{RGB}{200,200,200}
\definecolor{dkgray}{RGB}{150,150,150}
\newcommand{\rifflePass}{$\textsf{RiffleScrambler}$\xspace}
\newcommand{\traceTraj}{$\textsf{TraceTrajectories}$\xspace}
\newtheorem{lemma}{Lemma}
\newtheorem{theorem}{Theorem}
\newtheorem{definition}{Definition}
\newtheorem{example}{Example}
\newtheorem{corollary}{Corollary}
\newtheorem{remark}{Remark}
\begin{document}

\title{
RiffleScrambler -- a memory-hard password storing function 
\thanks{
 Authors were supported by Polish 
   National Science Centre contract number DEC-2013/10/E/ST1/00359.
 }\thanks{Accepted for ESORICS 2018}
}
\author{ 
 Karol Gotfryd\\{\small Wrocław University of}\\{\small Science and 
Technology}\and
 Paweł Lorek\\{\small Wrocław University} \and 
 Filip Zagórski\footnote{Corresponding author}\\{\small Wrocław University 
of}\\{\small Science and Technology,}\\{\small and Oktawave}
 }

\date{}
 
\maketitle

\begin{abstract}
We introduce RiffleScrambler: a new family of directed acyclic graphs and a 
corresponding  data-independent memory hard function with password independent 
memory access. We prove its memory hardness in the random oracle model.

RiffleScrambler is similar to Catena -- updates of hashes are determined by a 
graph (bit-reversal or double-butterfly graph in Catena). The advantage of the 
RiffleScrambler over Catena is that the underlying graphs are not predefined but 
are generated per salt, as in Balloon Hashing. Such an approach leads to higher 
immunity against practical parallel attacks. RiffleScrambler offers better 
efficiency than Balloon Hashing since the in-degree of the underlying graph is 
equal to 3 (and is much smaller than in Ballon Hashing). 
At the same time, because the underlying graph is an instance of a 
Superconcentrator, our construction achieves the same time-memory trade-offs.

\vspace{.25cm}
\noindent\textbf{Keywords}: Memory hardness, password storing, key 
derivation function, 
Markov chains, mixing time, Thorp shuffle.
\end{abstract}

\section{Introduction}\label{sec:introduction}

In early days of computers' era passwords were stored in plaintext in 
the form of pairs $(user, password)$. Back in 1960s it was observed,
that it is not secure. It took around a decade to incorporate a more secure 
way of storing users' passwords -- via a DES-based function 
\texttt{crypt}, as $(user, 
f_k(password))$ for a secret key $k$ or as $(user, f(password))$ for a one-way 
function. The first approach (with encrypting passwords) allowed admins to 
learn a user password while both methods enabled to find two users with the 
same password. To mitigate that problem, a random \textsf{salt} was added and 
in most systems, 
passwords are stored as  $(user, f(password, salt), salt)$ for a one-way 
function $f$. 
This way, two identical passwords will with high probability
be stored as
two  different strings (since the salt
should be chosen uniformly at random).

Then another issue appeared: an adversary, having a database of hashed 
passwords, can try many dictionary-based 
passwords or a more targeted attack~\cite{Wang2016}. The ideal password should 
be random, but users tend to select 
passwords with low entropy instead. Recently, a probabilistic model of the 
distribution of choosing passwords, based on Zipf's law  was 
proposed~\cite{Wang2017}. That is why one requires  
 password-storing function to  be ``\textbf{slow}'' enough to compute 
for an attacker and ``\textbf{fast}'' enough to compute for the 
authenticating server.

To slow-down external attackers two additional enhancements were proposed.
\textsf{Pepper} is similar to salt, its value is sampled from uniform 
distribution and passwords are stored as $(user, f(password, salt, pepper), 
salt)$ but the value of \textsf{pepper} is not stored -- evaluation (and 
guessing) of a password is slowed down by the factor equal to the size of the 
space from which \textsf{pepper} is chosen. \textsf{Garlic} is also used to 
slow down the process of verifying a password -- it tells how many 
times $f$ is called, a password is stored as $(user, 
f^{garlic}(password, salt, pepper), salt, garlic)$. Using \textsf{pepper} and 
\textsf{garlic} has one bottleneck -- it slows down at the same rate both 
an attack and the legitimate server.
Moreover this approach does not guarantee required advantage over 
an adversary who can evaluate a function simultaneously using ASICs 
(Application 
Specific Integrated Circuit) or clusters of GPUs. The reason is that 
the cost of evaluation of hash functions like SHA-1, SHA-2 on an ASIC is even 
several thousands times smaller than on a CPU. For instance Antminer S9 has 
claimed performance of 14 TH/s~\cite{antminer} (double SHA256) versus about 30 
GH/s for 
the best currently available GPU cards (GeForce GTX 1080TI, Radeon RX Vega 56) 
and up to 1 GH/s for CPUs.
Percival~\cite{Percival} noted that memory cost is more comparable across 
various platforms than computation time. He suggested a memory-hard functions 
(MHF) and introduced~\texttt{scrypt}~\cite{Percival}.
So the right approach in designing a password-storing function is not only to 
make a function slow, but to make it use 
relatively large amounts of memory.

One can think about an MHF as a procedure of evaluating a function $F$ which 
uses some ordering of memory calls. 
Such a function can be described as a DAG (Directed Acyclic Graph) $G = G_F$. 
Vertex $v$ represents some internal value,
and if evaluating this value requires values at $v_{i_1},\ldots,v_{i_M}$, then 
$v_{i_1},\ldots,v_{i_M}$ are parents of $v$ in $G$.
For example, calculating a function $ F = H^N(x) = H(\ldots H(H(x))\ldots )$ 
($H$ 
computed $N$ times -- like PBKDF2 -- Password Based Key Derivation 
Function~2~\cite{pbkdf} where $N = 1024$) can be represented by a graph 
$G=(V,E)$ with vertices $\{v_0,v_1,\ldots, v_{N-1}, v_N\}$ and edges 
$E=\{(v_i,v_{i+1}), i=0,\ldots,N-1\}$ (\ie a path). Initially, the value 
at vertex $v_0$ is $x$, computations yield in value $F(x)$ at vertex $v_{N}$.

We can categorize MHFs into two groups: data dependent MHFs (dMHF) 
and data independent MHFs (iMHF). 
Roughly speaking, for iMHFs the order of computations (graph $G_F$)  does not 
depend on a \textsl{password} (but it can still depend on a 
\textsl{salt}), whereas the ordering of calculations in dMHFs depends on data 
(\ie a \textsl{password}).
Because dMHFs may be susceptible to various side-channel attacks 
(via \eg timing attacks, memory access pattern) the main focus is on 
designing iMHFs.

\paragraph{Sequential vs Parallel attacks.}
Security of memory-hard functions can be analyzed in two models: sequential one 
and parallel one. In the sequential model, an adversary tries to invert a 
function by performing computation on a single-core machine while in the 
parallel model an adversary may use many processors to achieve its goal.
The security of a given memory-hard function comes from the properties of the 
underlying graphs, \eg if an underlying graph is a {\em{Superconcentrator}} 
(see Definition~\ref{def:superc}) 
then a function is memory-hard in the sequential model
while if the underlying graph is {\em{depth-robust}} then the function is 
memory-hard in the parallel model.

\subsection{Related Work}
To compare various constructions, besides iMHF and dMHF distinction,
one considers the complexity of evaluation of a given function. 
The formal definitions of sequential/parallel complexity are stated in
Section~\ref{sec:prelim} (and are formalized as a pebble-game), here we provide 
the intuitive motivation behind
these definitions. The sequential complexity $\Pi_{st}^{}(G)$ of a 
directed acyclic graph $G$
is the time it takes to label (pebble/evaluate) the graph times the maximal 
number of memory cells the best sequential 
algorithm   needs to 
evaluate (pebble) the graph.
In the similar fashion one defines cumulative complexity of 
parallel pebbling $\Pi_{cc}^{||}(G)$ of $G$, here this is the sum of the number
of used memory cells during  labeling (pebbling) the graph by the best
parallel algorithm. For detailed discussion on the pebbling game see e.g. 
\cite{Alwen2015,Boneh2016,Forler,Lengauer1982}.

For a graph that corresponds to evaluating \texttt{PBKDF2} above values
are equal to $\Pi_{st}(PBKDF2) = n$ and $\Pi_{cc}^{||}(PBKDF2) = n$
while if a function is memory hard $\Pi_{st}(PBKDF2) = \Omega(n^2)$.

The Password Hashing 
Competition~\cite{phc} was announced in 2013 in the quest of 
a standard function in the area. \texttt{Argon2i}~\cite{Biryukov2016} was the 
winner, while \texttt{Catena}~\cite{Forler2015_Catena}, 
\texttt{Lyra2}~\cite{Simplicio}, \texttt{yescript}~\cite{yescrypt} and 
\texttt{Makwa}~\cite{Pornin2015} were given a special recognition.

Let $\textsf{DFG}_n^\lambda$ be the Catena Dragonfly Graph,  
$\textsf{BFG}_n^\lambda$ be the Catena Butterfly Graph (see 
\cite{Alwen2017,Forler2015_Catena})
and $\textsf{BHG}_\sigma^\lambda$ be the graph corresponding
to evaluating Balloon Hashing~\cite{Boneh2016}.

\noindent
Concerning sequential attacks for $\textsf{DFG}_N^\lambda$ and 
$\textsf{BHG}_\sigma^\lambda$ we have:
%
\begin{lemma}\label{lem:result_dfg}
Any adversary using $S \leq 
N/20$  memory cells requires $T$ placements such that
\[
	T \geq N \left(\frac{\lambda N}{64 S}\right)^{\lambda}
\]
for $\textsf{DFG}_N^\lambda$.
\end{lemma}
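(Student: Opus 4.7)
The plan is to reduce the bound to the standard space-time trade-off for stacked superconcentrators, using the layered structure of $\textsf{DFG}_N^\lambda$ together with the fact that each layer is (essentially) a bit-reversal graph, which is a superconcentrator. This is the route taken by Lengauer-Tarjan and subsequently refined for Catena-style graphs in the Alwen-Blocki analysis cited as \cite{Alwen2017}.

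First, I would recall the structural description. With $n = \log_2 N$, the graph $\textsf{DFG}_N^\lambda$ consists of $\lambda+1$ rows of $N$ vertices each; within each row there is a path, and between row $i-1$ and row $i$ the cross edges form the bit-reversal permutation $\mathrm{br}:\{0,\dots,N-1\}\to\{0,\dots,N-1\}$. So each vertex $v_{i,j}$ with $i\geq 1$ has in-degree two with parents $v_{i,j-1}$ and $v_{i-1,\mathrm{br}(j)}$, and the output of the computation is $v_{\lambda,N-1}$. The crucial fact is that each row together with the bit-reversal edges to the next row forms a copy of the bit-reversal superconcentrator $\textsf{BRG}_N$.

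Second, I would invoke the single-layer pebbling lemma for the bit-reversal graph: if a pebbling strategy starts an interval with at most $S$ pebbles on row $i-1$ and ends the interval with a pebble on a specified vertex of row $i$, then the number of placements during that interval is at least $\Omega(N^2/S)$, provided $S$ is small enough (which is where the hypothesis $S\leq N/20$ enters). This is the Lengauer-Tarjan bound and follows from the superconcentrator property: too few pebbles on the input side forces many vertices to be recomputed to route information through the bit-reversal permutation.

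Third, and this is the core of the argument, I would compose the single-layer bound across the $\lambda$ rows. Given any valid pebbling of $\textsf{DFG}_N^\lambda$ using at most $S$ simultaneous pebbles, I would partition time into phases delimited by the first pebbling of each row's final vertex (or, following Alwen-Blocki, by more carefully chosen ``layer cuts''), argue that in each phase the adversary pays a multiplicative $\Omega(\lambda N / S)$ factor for transporting pebbles through that row's superconcentrator, and multiply the phase costs together. The extra leading factor of $N$ comes from the number of output vertices of the final row that must be sequentially produced on the path.

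The main obstacle is making the composition argument tight with the right constants. A naive union bound only gives an additive trade-off across rows; obtaining the multiplicative $(\lambda N/64S)^\lambda$ requires arguing that no ``shared work'' can amortize the per-row cost, and that the condition $S \leq N/20$ propagates through the induction so that every row indeed incurs the full single-layer trade-off. This is precisely the analysis carried out in \cite{Alwen2017} for Dragonfly; the contribution here is to verify that the same bookkeeping applies with constants compatible with $S\leq N/20$ and the prefactor $1/64$ in the exponent.
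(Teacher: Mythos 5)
Your top-level plan is the same as the paper's: observe that each layer of $\textsf{DFG}_N^\lambda$ is a bit-reversal superconcentrator, so the whole graph is an $(N,\lambda)$-Superconcentrator, and then invoke the stacked-superconcentrator pebbling lower bound. In fact the paper gives no proof of this lemma at all -- it is recalled as a known result, and the bound with exactly these constants ($S\leq N/20$, prefactor $1/64$, exponent $\lambda$) is precisely the Lengauer--Tarjan theorem that the paper states as Theorem~\ref{thm:superconcentrator}; once the superconcentrator property of each layer is noted, the lemma is a black-box citation (this is also how the paper proves its own analogous Lemma~\ref{lem:result}). Where your write-up drifts is in the third step: you treat the cross-layer composition as the core open work and propose to ``multiply the phase costs together,'' attributing the bookkeeping to \cite{Alwen2017}. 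That reference concerns parallel cumulative complexity and depth-robustness, not this sequential bound; the multiplicative $(\lambda N/64S)^\lambda$ trade-off is obtained in \cite{Lengauer1982} by a recursive argument over subintervals of the pebbling (distinguishing intervals where the pebble count stays low from those where it does not), and a literal product of per-phase placement counts is not a meaningful lower bound on the total number of placements. So either cite Theorem~\ref{thm:superconcentrator} and stop, or, if you want a self-contained proof, replace the phase-multiplication heuristic with the actual Lengauer--Tarjan recursion on $\lambda$.
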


\begin{lemma}\label{lem:result_balloon}
Any adversary using $S \leq N/64$ memory cells, for in-degree $\delta = 7$ and 
$\lambda$ rounds requires
\[
 T \geq \frac{(2^{\lambda} - 1) N^2}{32 S}
\]
placements for $\textsf{BHG}_\sigma^\lambda$.
 
\end{lemma}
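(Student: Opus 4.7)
The plan is to reduce the lemma to a sequential pebble-game lower bound on $\textsf{BHG}_\sigma^\lambda$, exploiting its layered superconcentrator structure. First I would fix notation: $\textsf{BHG}_\sigma^\lambda$ consists of $\lambda$ concatenated layers $B_1, \ldots, B_\lambda$ of $N$ vertices each, where every $B_i$ is obtained from a path of length $N$ augmented with $\delta - 1 = 6$ salt-dependent back-edges per vertex, and the last vertex of $B_i$ feeds into the first vertex of $B_{i+1}$. A standard concentration argument over the random salt then shows that each $B_i$, viewed with its $N$ ``inputs'' (the labels carried over from $B_{i-1}$) and $N$ outputs, is a superconcentrator with overwhelming probability.

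Next I would establish the single-layer base case: for any superconcentrator on $N$ inputs and $N$ outputs with constant in-degree, and any pebbling strategy that starts with at most $S \leq N/64$ pebbles on the input layer and finishes with a pebble on every output, the number of placements is at least $N^2/(32 S)$. The proof follows the Lengauer--Tarjan style argument: partition the schedule into epochs of length $\Theta(N/S)$, and use the superconcentrator property to guarantee $\Omega(N)$ vertex-disjoint paths from unpebbled inputs to still-unpebbled outputs in each epoch, each of which contributes $\Omega(N)$ placements. Summing over the $\Theta(N/S)$ epochs yields the base bound.

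The heart of the argument is the amplification to $\lambda$ rounds, which produces the factor $2^\lambda - 1$. I would proceed by induction on $\lambda$. At each round boundary the adversary retains at most $S \leq N/64$ of the $N$ labels produced by $B_i$; since $S \ll N$, producing the outputs of $B_{i+1}$ forces the adversary to either re-derive a constant fraction of the missing round-$i$ outputs \emph{or} absorb the full single-layer cost afresh. Formalising this dichotomy gives a recurrence of the form $T_\lambda \geq 2\, T_{\lambda-1} + N^2/(32S)$ with $T_1 \geq N^2/(32S)$, whose solution is exactly $T_\lambda = (2^\lambda - 1)\, N^2/(32 S)$, matching the bound in the lemma.

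The main obstacle will be the inductive step, specifically justifying the coefficient $2$ in front of $T_{\lambda-1}$: one must show that the adversary cannot cheaply amortise work across round boundaries. This requires the full strength of the superconcentrator property, namely that any snapshot of only $S$ pebbles at a boundary layer leaves an $\Omega(N)$-subset of outputs whose marginal pebbling cost is essentially unchanged and whose re-derivation must pass through a fresh copy of $B_{i-1}, \ldots, B_1$. Pinning down this ``independence'' quantitatively, so that the recurrence exhibits genuine doubling rather than a mere additive increment, is the delicate point and is what dictates the constants $64$ and $32$ appearing in the statement.
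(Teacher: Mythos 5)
First, a point of context: the paper does not prove this lemma at all --- it is quoted as background from the Balloon Hashing paper \cite{Boneh2016} for comparison with \rifflePass, so there is no in-paper proof to match your argument against. Judged on its own terms, your proposal correctly guesses the overall shape of the known argument: a single-round bound of order $N^2/(32S)$ combined with a doubling recurrence $T_\lambda \geq 2\,T_{\lambda-1}+N^2/(32S)$, whose solution is $(2^\lambda-1)N^2/(32S)$. However, it contains two genuine gaps.

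The first is the claim that each round $B_i$ of $\textsf{BHG}_\sigma^\lambda$ is an $N$-superconcentrator with high probability. This is asserted without proof and is almost certainly false as stated: a superconcentrator must contain a path from \emph{every} input to \emph{every} output (the case $k=1$ of Definition~\ref{def:superc}), whereas in a single Balloon round the vertex $v_i^{r-1}$ can reach $v_j^{r}$ for small $j$ only if $v_i^{r-1}$ happens to be drawn as one of the random predecessors of some $v_m^{r}$ with $m\le j$, which fails with constant probability for a fixed pair $(i,j)$. The actual argument in \cite{Boneh2016} does not go through superconcentrators; it establishes a weaker ``well-spread''/avoiding property of the random predecessor sets (every consecutive block of round-$r$ vertices has many round-$(r-1)$ predecessors outside any small pebbled set), and it is that property which yields both the base case and the specific constants $64$ and $32$, which your outline never derives. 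The second gap you concede yourself: the coefficient $2$ in the recurrence --- the claim that outputs missing at a round boundary must be re-derived by a fresh pass through the earlier rounds with no amortisation --- is exactly the content of the inductive step, and it is left unproved. As written, the proposal is a plausible sketch of the right strategy rather than a proof.
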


\noindent
Concerning parallel attacks for $\textsf{BFG}^\lambda_n$, $\textsf{DFG}^n_\lambda$ and $\textsf{BHG}_\sigma^\lambda$  we have the following
\begin{theorem}[Theorem 7 in \cite{Alwen2017}]
 \begin{itemize}
  \item If $\lambda,n \in \mathbb{N}^+$ such that $n=2^g(\lambda(2g-1)+1)$ for 
some 
$g\in\mathbb{N}^+$ then 
  $$\Pi_{cc}^{||}(\textsf{BFG}_n^\lambda)=\Omega\left({n^{1.5}\over 
g\sqrt{g\lambda}}\right)$$

  \item If $\lambda,n \in \mathbb{N}^+$ such that $k=n/(\lambda+1)$ is a power 
of 2 
then 
  $$\Pi_{cc}^{||}(\textsf{DFG}_n^\lambda)=\Omega\left({n^{1.5}\over 
\sqrt{\lambda}}\right)$$
  
  \item If $\tau, \sigma\in\mathbb{N}^+$ such that $n=\sigma \cdot \tau$ then with 
high probability  
  $$\Pi_{cc}^{||}(\textsf{BHG}_\tau^\sigma)=\Omega\left({n^{1.5}\over 
\sqrt{\tau}}\right)$$  
  \end{itemize}

\end{theorem}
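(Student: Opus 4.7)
The unifying strategy is the Alwen--Blocki framework that reduces parallel cumulative pebbling lower bounds to a purely combinatorial property of the DAG, namely \emph{depth-robustness}. Recall that $G=(V,E)$ is $(e,d)$-depth-robust if for every $S\subseteq V$ with $|S|\le e$, the graph $G-S$ still contains a directed path of length at least $d$. The key lemma to invoke is that for any such DAG one has $\Pi_{cc}^{||}(G)=\Omega(e\cdot d)$; this follows by a potential argument on any legal parallel pebbling showing that at any time the set of ``useful'' pebbles must hit every long path, forcing the accumulated pebble count to be at least $e\cdot d$. The plan is therefore to exhibit, for each of the three families, parameters $(e,d)$ whose product matches the claimed bound.

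\textbf{Butterfly graph $\textsf{BFG}_n^\lambda$.} Here $n=2^g(\lambda(2g-1)+1)$, so the graph consists of $\lambda$ stacked double-butterflies of width $2^g$ and depth $\Theta(g)$. I would first argue that the stacked butterfly contains many internally-disjoint long paths between fixed source/sink columns: a butterfly of width $2^g$ is a permutation network, and stacking $\lambda$ copies lets one route $2^g$ disjoint paths each of length $\Theta(g\lambda)$. To convert this path structure into depth-robustness, pick $e=\Theta(2^g/\sqrt{g\lambda})$; then for any deleted set $S$ of size $\le e$, a counting / averaging argument on the $2^g$ disjoint paths shows at least one survives with length $d=\Theta(g\lambda)$ of its edges undeleted, so $G-S$ still contains a path of length $\Omega(g\lambda)$. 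Multiplying gives $ed=\Omega(2^g\sqrt{g\lambda})=\Omega(n^{1.5}/(g\sqrt{g\lambda}))$ using $n=\Theta(2^g g\lambda)$.

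\textbf{Dragonfly graph $\textsf{DFG}_n^\lambda$.} The approach is identical in spirit but simpler: $\textsf{DFG}_n^\lambda$ is $\lambda+1$ stacked bit-reversal layers on $k=n/(\lambda+1)$ vertices per layer, $k$ a power of two. Using the classical fact that the bit-reversal graph (or its stacking) admits $\Theta(k)$ internally vertex-disjoint source-sink paths, the same removal/averaging argument yields that the graph is $(\Omega(k/\sqrt{\lambda}),\Omega(\lambda\sqrt{k}))$-depth-robust after appropriate tuning, giving $ed=\Omega(k^{1.5}\sqrt{\lambda})=\Omega(n^{1.5}/\sqrt{\lambda})$.

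\textbf{Balloon hashing graph $\textsf{BHG}_\tau^\sigma$.} This is a random DAG on $n=\sigma\tau$ vertices with constant in-degree. Here I would follow the high-probability depth-robustness proofs for random DAGs: show via a Chernoff + union bound that with high probability, for every set $S$ of size at most $e=\Omega(n/\sqrt{\tau})$, the subgraph $\textsf{BHG}_\tau^\sigma-S$ contains a path of length $d=\Omega(\sqrt{\tau}\cdot \sigma)$; the key combinatorial input is that within each of the $\sigma$ rounds, the random neighbour choices make long paths robust to adversarial vertex deletion. Again $ed=\Omega(n^{1.5}/\sqrt{\tau})$.

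\textbf{Expected main obstacle.} The invocation of the depth-robustness-to-$\Pi_{cc}^{||}$ lemma is routine once set up, and the deterministic depth-robustness for $\textsf{DFG}$ is a fairly clean counting argument. The technically delicate steps are (i) tightening the constants in the butterfly case so that the $g\sqrt{g\lambda}$ denominator appears rather than a weaker $g^2\lambda$, which requires exploiting that the $2^g$ butterfly paths are not merely edge-disjoint but \emph{vertex}-disjoint across the butterfly's recursive structure; and (ii) the high-probability analysis for $\textsf{BHG}$, where one must union-bound over exponentially many candidate deletion sets $S$ while keeping the probability estimate sharp enough to afford the target values of $e$ and $d$. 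These are exactly the places where the full Alwen--Blocki argument invests its work.
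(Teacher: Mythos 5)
This statement is imported verbatim from Alwen, Blocki and Pietrzak and the paper offers no proof of it; the closest in-paper analogue is Lemma~4 for $\mathsf{RSG}_\lambda^N$, which is proved via the \emph{stacked dispersed graphs} machinery (Lemma~5, Lemma~6 and Theorem~6), not via depth-robustness. Your plan rests entirely on establishing $(e,d)$-depth-robustness with $ed=\Omega(n^{1.5}/\cdot)$ and invoking $\Pi_{cc}^{||}(G)>ed$, and this is the step that fails: the Catena and Balloon graphs are provably \emph{not} that depth-robust. Take $\textsf{DFG}_n^\lambda$, which is $\lambda+1$ rows of $k=n/(\lambda+1)$ nodes each, every row carrying a horizontal path. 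Deleting $e$ nodes spaced $n/e$ apart along those horizontal paths leaves a graph in which any directed path can traverse at most one surviving segment per row, so the remaining depth is $O(\lambda\cdot n/e)$. Hence for every $e$ the graph is at best $\bigl(e,O(\lambda n/e)\bigr)$-depth-robust and the depth-robustness lemma can never certify more than $\Pi_{cc}^{||}=O(\lambda n)=O(\lambda^2 k)$, which is far below the target $\Omega(k^{1.5}\lambda)=\Omega(n^{1.5}/\sqrt{\lambda})$ unless $\lambda\gtrsim\sqrt{k}$. The same calculation kills the $\textsf{BFG}$ bullet. This depth-reducibility is precisely what powers the Alwen--Blocki \emph{attacks} on these functions, so no tuning of constants or of the disjoint-path counting can rescue the approach; your ``expected main obstacle'' is not where the difficulty lies.

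The actual argument (both in the cited work and in this paper's proof of Lemma~4) replaces depth-robustness by dispersion: each stacked superconcentrator layer of width $k$ guarantees, for every interval of $g$ consecutive nodes among the last $k$ nodes, $g$ node-disjoint paths of length $g$ ending in that interval (a $(g,g)$-dependency), which makes the graph a member of $\mathbb{D}_{1,g}^{\lambda,k}$; Theorem~6 then gives $\Pi_{cc}^{||}(G)\geq \varepsilon\lambda g\left(\frac{k}{2}-g\right)$, and choosing $g=\lceil\sqrt{k}\rceil$ yields the $n^{1.5}$-type bounds with the stated denominators. Your instinct to exploit the many vertex-disjoint paths of the butterfly/superconcentrator structure is the right raw material, but it must be fed into the dispersed-graph theorem rather than converted into depth-robustness, where the information is destroyed by the cheap horizontal-path cuts described above. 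The $\textsf{BHG}$ bullet in the source likewise goes through showing the random graph is a superconcentrator with high probability and then applying the same dispersion theorem, not through a union bound over deletion sets.
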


Alwen and Blocki~\cite{Alwen2016} show that it is possible (in parallel 
setting) that an attacker may save space for any iMHF (\eg Argon2i, Balloon, 
Catena, \etc) so the $\Pi_{cc}^{||}$ is not $\Omega(n^2)$ but 
$\Omega(n^2/{\log n})$. The attack is applicable only when the instance of an 
MHF 
requires large amount of memory (\eg~>~1GB) while in practice, MHFs would be 
run so the memory consumption is of the order of just several megabytes (\eg 
16MB).
In order to mount such an attack, a special-purpose hardware must be built 
(with lots of shared memory and many cores).
While Alwen, Blocki and Pietrzak~\cite{Alwen2017} improve 
these 
attacks even further, it was still unclear if this type of 
attack is of practical concern. But recently Alwen and Blocki~\cite{Alwen2017b} 
improved these attacks even more and presented implementation which \eg
successfully ran against the latest version of Argon (Argon2i-B).

%

\subsection{Our contribution}
In this paper (details given in Section \ref{sec:rifflePass}) we introduce 
\rifflePass~-- a new family of directed acyclic graphs and a corresponding 
data-independent memory 
hard function with password independent memory access. We prove its memory 
hardness in the random oracle model.
\smallskip\par\noindent
For a password $x$, a salt $s$ and security parameters (integers) 
$g, \lambda$:
\begin{enumerate}
 \item a permutation $\rho = \rho_{g}(s)$ of $N = 2^g$ elements is 
generated using (time reversal of) Riffle Shuffle (see Algorithm 
\ref{alg:riffle}) and
 \item  a computation graph  $\mathsf{RSG}_{\lambda}^N = 
G_{\lambda, g, s} = G_{\lambda, g}(\rho)$ is 
generated.
\end{enumerate}

Evaluation of \rifflePass (a function on $2\lambda$ 
stacked $\mathsf{RSG}^N$ graphs) with $S = N = 2^g$  memory cells 
takes the number of steps proportional to $T \approx 3 \lambda N$ (the 
max-indegree of the graph is equal to $3$).


\noindent
Our result on time-memory trade-off concerning \textsl{sequential attacks} is following.
\begin{lemma}\label{lem:result}
Any adversary using $S \leq 
N/20$  memory cells requires $T$ placements such that
\[
	T \geq N \left(\frac{\lambda N}{64 S}\right)^{\lambda}
\]
for the $\mathsf{RSG}_{\lambda}^N$.
\end{lemma}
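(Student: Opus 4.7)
\textbf{Proof plan for Lemma \ref{lem:result}.} The statement has exactly the same shape as Lemma \ref{lem:result_dfg} for the Catena Dragonfly graph, and that is no accident: the paper announces that $\mathsf{RSG}_\lambda^N$ is obtained by stacking $\lambda$ copies of a single-layer graph that is a superconcentrator on $N$ inputs and $N$ outputs (with the connections between layers induced by the Riffle Shuffle permutation $\rho$). My plan is therefore to (i) isolate the single-layer superconcentrator property as a structural lemma about $\mathsf{RSG}^N$, and then (ii) feed it into the standard Lengauer--Tarjan pebbling trade-off for stacked superconcentrators, exactly as in the Catena analysis.

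\textbf{Step 1: structural lemma.} I would first unpack the construction of Section~\ref{sec:rifflePass} and verify that the sub-graph corresponding to one application of the Riffle-Shuffle-generated permutation, together with the auxiliary wiring of in-degree at most $3$, is an $N$-superconcentrator in the sense of Definition~\ref{def:superc}: for every $k \le N$ and every $k$-subset $I$ of inputs and $k$-subset $O$ of outputs, there are $k$ vertex-disjoint paths from $I$ to $O$. The cleanest way is to exhibit the $k$ disjoint paths directly from the Riffle/Thorp structure: the time-reversal of Riffle Shuffle is a butterfly-like network and butterfly networks are known to support vertex-disjoint routings of arbitrary partial permutations. Because the salt-dependent permutation $\rho$ only reorders inputs and outputs, it does not destroy the superconcentrator property.

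\textbf{Step 2: stacked pebbling lower bound.} Given that $\mathsf{RSG}_\lambda^N$ is a stack of $\lambda$ such superconcentrators, I would invoke the standard inductive pebbling argument (as used in~\cite{Lengauer1982} and in the proof of Lemma \ref{lem:result_dfg}). Write $T_\lambda(S)$ for the minimum number of placements needed to pebble the top layer of $\mathsf{RSG}_\lambda^N$ in the sequential pebble game subject to space $S$. Partition the pebbling trace of the top superconcentrator into $\lambda$ intervals of length $T_\lambda/\lambda$ and consider the moment at which the smallest number of top-layer outputs has already been produced; an averaging/pigeon-hole argument identifies a set $O$ of at least $N-T_\lambda/\lambda$ outputs still to be produced and, by space bound, at most $S$ pebbles on intermediate vertices. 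The superconcentrator property of that layer then forces $|O|-S$ vertex-disjoint paths back into the layer below, so at least $|O|-S$ of its outputs must be (re)produced after this point. Choosing the interval length to balance "outputs already produced" against "outputs to be reproduced" yields the recursion $T_\lambda(S) \ge N \cdot (\lambda N / (64 S)) \cdot T_{\lambda-1}(S)/N$, with base case $T_0(S) \ge N$. Unfolding gives $T_\lambda(S) \ge N (\lambda N/64 S)^\lambda$, which is the claimed bound, under the hypothesis $S \le N/20$ that ensures the interval choice above is legal.

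\textbf{Main obstacle.} The bookkeeping in Step 2 is essentially mechanical once Step 1 is in place; the genuinely new work is Step 1, i.e.\ certifying that the \emph{random}, salt-dependent graph produced by the time-reversed Riffle Shuffle really is a superconcentrator with in-degree $3$. I would expect to spend most of the proof translating the vertex-disjoint-routing property of the butterfly/Thorp network into the concrete wiring rules used by \rifflePass, and checking that the layer-to-layer identification by $\rho$ preserves disjointness. Once that is established, the constants $20$ and $64$ in the statement drop out of the interval-balancing step just as in~\cite{Forler2015_Catena}.
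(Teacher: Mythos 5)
Your plan is essentially the paper's proof: the paper likewise reduces the claim to showing that each $N$-Double-Riffle-Graph layer is an $N$-Superconcentrator (Theorem~\ref{thm:riffleConcentrator}, so that $\mathsf{RSG}_{\lambda}^N$ is an $(N,\lambda)$-Superconcentrator) and then invokes the Lengauer--Tarjan bound (Theorem~\ref{thm:superconcentrator}) as a black box rather than re-deriving the pebbling recursion you sketch in Step~2. One caution on your Step~1: the salt does not merely relabel inputs and outputs of a fixed butterfly --- via \traceTraj it determines the internal wiring of every layer --- and verifying that this wiring still admits $k$ vertex-disjoint paths (routing to a contiguous block in the middle layer and using the symmetry of the lower half) is precisely the content of the paper's Lemma~\ref{lem:reversedPaths}, Corollary~\ref{cor:permVertices} and Lemma~\ref{lem:pairingButterfly}, which you correctly identify as the main work.
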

The above lemma means that (in the sequential model) any adversary who uses $S$ 
memory cells for which $S \leq N/20$ would spend at least $T$ steps evaluating the 
function and $T \geq N \left(\frac{\lambda N}{64 S}\right)^{\lambda}$ (the punishment for decreasing available memory cells is severe).
The result for sequential model gives the same level as for Catena (see Lemma \ref{lem:result_dfg}) while it is  
much better than for BallonHashing (see Lemma \ref{lem:result_balloon}). The main advantage of \rifflePass (compared to Catena) is that each
salt corresponds (with high probability) to a different computation graph, and 
there are $N! = 2^g!$ of them (while Catena uses one, \eg bit-reversal based 
graph).
Moreover it is easy to modify \rifflePass so the number of possible computation 
graphs is equal to $N!^{2 \lambda} = (2^g!)^{2 \lambda}$ (we can use different -- also salt dependent -- permutations in each stack).
\smallskip\par
On the other hand \rifflePass guarantees better immunity against parallel 
attacks than Catena's computation graph.
Our result concerning \textsl{parallel attacks} is following.
\begin{lemma}\label{lem:main_par}
For positive integers $\lambda, g$ let $n = 2^g(2 
\lambda g + 1)$  for 
some $g \in \mathbb{N}^{+}$.
Then
\[
 \picc{\rsg} = 
\Omega\left(\frac{n^{1.5}}{\sqrt{g \lambda}}\right).
\]
\end{lemma}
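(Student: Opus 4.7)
The plan is to mirror the proof of Theorem~7 of \cite{Alwen2017}, which establishes the analogous $\Omega(n^{1.5}/\sqrt{g\lambda})$ bound for $\textsf{BFG}_n^\lambda$. I would begin by viewing $\rsg$ as a layered DAG with $L = 2\lambda g + 1$ layers of width $w = N = 2^g$, so that $n = Lw = N(2\lambda g + 1)$ matches the hypothesis and edges only connect vertices in adjacent layers. This is the natural decomposition induced by stacking $2\lambda$ copies of a single pass of the Riffle-Shuffle graph $\mathsf{RSG}^N$, each of which contributes $g$ fresh layers on top of the previous sink layer, and it parallels the layering used for the Catena butterfly graph.

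The structural input I would rely on is that each block of $g+1$ consecutive layers --- one full riffle-shuffle pass --- induces an $N$-superconcentrator from its source to its sink layer. This is the design goal of the Riffle-Shuffle-based routing described in Section~\ref{sec:rifflePass}, and it follows because the (time-reversed) Thorp/Riffle circuit of depth $g$ is, up to the salt-dependent relabelling $\rho$, a butterfly network, which is a canonical $N$-superconcentrator. With this in hand, $\rsg$ becomes an instance of the ``stacked superconcentrator'' family whose parallel cumulative complexity is controlled by the generic pebbling arguments of \cite{Alwen2017}.

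The main combinatorial step is then to invoke the generic lower bound: for any layered DAG on $n$ vertices decomposed into $L$ layers of width $w$ in which every contiguous block induces a $w$-superconcentrator, one has
\[
\picc{G} \;=\; \Omega\bigl(n\sqrt{w}\bigr).
\]
Substituting $w = N = 2^g$ and using $n = N(2\lambda g + 1)$ yields
\[
n\sqrt{w} \;=\; n\sqrt{N} \;=\; n\sqrt{\tfrac{n}{2\lambda g + 1}} \;=\; \Omega\!\left(\frac{n^{1.5}}{\sqrt{g\lambda}}\right),
\]
which is exactly the claimed bound.

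The hard part will be verifying the per-block superconcentration property for $\rsg$, since unlike the fixed butterfly wiring, the permutation $\rho$ is sampled per salt. Two routes are available: either show that \emph{every} permutation in the image of the Riffle-Shuffle sampler yields a block that is an $N$-superconcentrator (giving a deterministic lower bound), or show this only holds with high probability over $\rho$ (giving a w.h.p.\ bound, analogous to the $\textsf{BHG}$ bullet of Theorem~7 in \cite{Alwen2017}). Given the intrinsic butterfly structure embedded in a riffle-shuffle circuit, I expect the deterministic route to go through with mild bookkeeping, but carefully exhibiting $N$ vertex-disjoint source-to-sink paths under an arbitrary $\rho$ in the sampler's support is where the bulk of the technical effort would sit.
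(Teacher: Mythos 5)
Your overall strategy --- decompose $\rsg$ into stacked superconcentrator blocks and feed the structure into the parallel pebbling machinery of \cite{Alwen2017} --- is the same as the paper's, but two of your concrete steps do not hold as stated. First, a single riffle pass of $g$ layers is \emph{not} an $N$-superconcentrator: a lone depth-$g$ butterfly-type network does not admit $k$ vertex-disjoint paths between arbitrary pairs of $k$-subsets of its inputs and outputs. This is precisely why the construction mirrors each riffle block about level $g$ to form the $N$-Double-Riffle-Graph of $2g$ layers (cf.\ Bradley's ``pair of butterflies'' and the proof of Theorem~\ref{thm:riffleConcentrator}, which routes $V_{in}\to V_{middle}\to V_{out}$ using that symmetry). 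The correct decomposition is therefore $\lambda$ blocks of $2g$ layers each, as in the paper's Lemma~\ref{lem:par_rsg_uwagi}, not $2\lambda$ blocks of $g$ layers each.

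Second, the ``generic lower bound'' $\picc{G}=\Omega(n\sqrt{w})$ that you invoke for layered stacks of superconcentrators is not a theorem of \cite{Alwen2017}, and you give no argument for it. The tool actually available is Theorem~\ref{thm:thm6} for stacked dispersed graphs: one verifies (the paper's Lemma~\ref{lem:par_D}, following Lemma~6 of \cite{Alwen2017}) that $\rsg\in\mathcal{D}_{1,\lceil\sqrt{n'}\rceil}^{\lambda,n'}$ with $n'=2^g$, which yields $\picc{\rsg}\geq \lambda\sqrt{n'}\left(n'/2-\sqrt{n'}\right)=\Omega\bigl(\lambda (n')^{1.5}\bigr)$: one contribution of order $(n')^{1.5}$ \emph{per superconcentrator block}, not per layer. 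Substituting $n'=n/(2\lambda g+1)$ gives $\Omega\bigl(n^{1.5}/(g\sqrt{g\lambda})\bigr)$, whereas your claimed $n\sqrt{w}=Lw^{1.5}=\Theta\bigl(\lambda g\,(n')^{1.5}\bigr)$ is larger by a factor of $g$. Your final substitution is arithmetically fine, but the premise it rests on is unjustified; you would need either to prove your stronger generic bound or to settle for the $\Omega\bigl(n^{1.5}/(g\sqrt{g\lambda})\bigr)$ that the dispersed-graph theorem delivers --- which, incidentally, is also all that the paper's own chain of lemmas establishes, matching the $\textsf{BFG}$ bullet of Theorem~7 in \cite{Alwen2017} rather than the statement of Lemma~\ref{lem:main_par}.
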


The \rifflePass is a password storing method that is immune to cache-timing 
attacks since memory access pattern is password-independent.

\section{Preliminaries}\label{sec:prelim}

For a directed acyclic graph (DAG) $G = (V, E)$ of $n = |V|$ nodes, we say 
that the indegree is $\delta = \max_{v \in V} indeg(v)$ if $\delta$ is the 
smallest number such that for any $v \in V$ the number of incoming edges is not 
larger than $\delta$.
Parents of a  node $v \in V$ is the set $\textsf{parents}_G(v) = \{u \in V: (u, 
v) \in E\}$. 

We say that $u \in V$ is a \textit{source} node if it has no 
parents (has indegree 0) and we say that $u \in V$ is a \textit{sink} node if 
it is not a parent for any other node (it has 0 outdegree). We denote the set 
of all sinks of $G$ by $\textsf{sinks}(G) = \{v \in V: outdegree(v) = 0\}$.



The directed path $p = (v_1, \ldots, 
v_t)$ is of length $t$ in $G$ if $(\forall i) v_i \in V$, $(v_i, 
v_{i+1}) \in E$, we denote it by $\textsf{length}(p) = t$. The \textit{depth} 
$d = \textsf{depth}(G)$ of a graph $G$ is the length of the longest directed 
path in $G$.

\subsection{Pebbling and complexity}
One of the methods for analyzing iMHFs is to use so called \textsl{pebbling 
game}. We will follow \cite{Alwen2017} with the notation
(see also therein for more references on pebbling games).

\begin{definition}[Parallel/Sequential Graph Pebbling]\label{def:pebbling}
Let $G = (V, E)$ be a DAG and let $T \subset V$ be a target set of nodes to be 
pebbled. A pebbling configuration (of $G$) is a subset $P_i \subset V$. A legal 
parallel pebbling of $T$ is a sequence $P = (P_0, \ldots, P_t)$ of pebbling 
configurations of $G$, where $P_0 = \emptyset$ and which satisfies conditions $1$ 
and $2$ below. A sequential pebbling additionally must satisfy condition $3$.
\begin{enumerate}
 \item At some step every target node is pebbled (though not necessarily 
simultaneously).
  \[
   \forall x \in T \quad \exists z \leq t \quad : \quad x \in P_z.
  \]

  \item Pebbles are added only when their predecessors already have a pebble at 
the end of the previous step.
  \[
   \forall i \in [t] \quad : \quad x \in (P_i \setminus P_{i-1}) \Rightarrow 
\textsf{parents}(x) \subset P_{i-1}.
  \]

  \item At most one pebble is placed per step.
  \[
   \forall i \in [t] : |P_i \setminus P_{i-1}| \leq 1.
  \]

\end{enumerate}
We denote with $\mathcal{P}_{G, T}$ and $\mathcal{P}_{G, T}^{||}$ the set of 
all legal sequential and parallel pebblings of $G$ with target set $T$, 
respectively. 
\end{definition}
Note that $\mathcal{P}_{G, T} \subset \mathcal{P}_{G, T}^{||}$. The most 
interesting case is when $T = \textsf{sinks}(G)$, with such a $T$ we write 
$\mathcal{P}_G$ and $\mathcal{P}_G^{||}$.
 
\begin{definition}[Time/Space/Cumulative Pebbling Complexity]
The time, space, space-time and cumulative complexity of a pebbling $P = \{P_0, 
\ldots, P_t\} \in \mathcal{P}_G^{||}$ are defined as:
\[
 \Pi_t(P) = t, \quad \Pi_s(P) = \max_{i \in [t]}|P_i|, \quad \Pi_{st}(P) = 
\Pi_t(P) \cdot \Pi_s(P), \quad \Pi_{cc}(P) = \sum_{i\in[t]} |P_i|.
\]
For $\alpha \in \{s, t, st, cc\}$ and a target set $T \subset V$, the 
sequential and parallel pebbling complexities of $G$ are defined as
\[
 \Pi_{\alpha}(G, T) = \min_{P \in \mathcal{P}_{G, T}} \Pi_{\alpha}(P), \qquad 
 \Pi_{\alpha}^{||}(G, T) = \min_{P \in \mathcal{P}_{G, T}^{||}} 
\Pi_{\alpha}(P). 
\]
When $T = \textsf{sinks}(G)$ we write $\Pi_{\alpha}(G)$ and 
$\Pi_{\alpha}^{||}(G)$. 
\end{definition}

%
%
%

\subsection{Tools for sequential attacks}

\begin{definition}[$N$-Superconcentrator]\label{def:superc}
A directed acyclic graph $\GG = \tuple{V, E}$ with a set of vertices $V$ and a 
set of edges $E$, 
a bounded indegree, $N$ inputs, and $N$ outputs is called $~{\textsf{N-Superconcentrator}}$ if
for every $k$ such that $1 \leq k \leq N$ and for every pair of
subsets $V_1 \subset V$ of $k$ inputs and $V_2 \subset V$ of $k$ outputs, there are $k$ vertex-disjoint paths connecting 
the vertices in $V_1$ to the vertices in $V_2$.
\end{definition}
\noindent
By stacking $\lambda$ (an integer) $N$-Superconcentrators we obtain a graph called $(N,\lambda)$-Superconcentrator.
\begin{definition}[$(N, \lambda)$-Superconcentrator]
Let $\GG_i, i=0,\ldots, \lambda-1$ be $N$-Superconcentrators. Let $\GG$ be the graph created by joining the outputs 
of $\GG_i$ to the corresponding inputs of $\GG_{i+1}, i=0,\ldots,\lambda-2$. Graph $\GG$ is called 
$(N, \lambda)$-Superconcentrator.
\end{definition}

\begin{theorem}[Lower bound for a $(N, \lambda)$-Superconcentrator 
\cite{Lengauer1982}]\label{thm:superconcentrator}
Pebbling a $(N, \lambda)$-Superconcentrator using $S \leq 
N/20$   pebbles requires $T$ placements such that
\[
	T \geq N \left(\frac{\lambda N}{64 S}\right)^{\lambda}.
\]
\end{theorem}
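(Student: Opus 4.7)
The plan is to prove the lemma by (i) verifying that $\mathsf{RSG}_\lambda^N$ decomposes as $\lambda$ stacked $N$-Superconcentrators, and (ii) carrying out the Lengauer--Tarjan recursive pebbling accounting directly on this decomposition. I deliberately avoid treating Theorem~\ref{thm:superconcentrator} purely as a black box: since the lemma's bound matches that theorem verbatim, the point of the proof is to trace through why the specific stacked-superconcentrator structure of $\mathsf{RSG}_\lambda^N$ forces exactly the claimed trade-off.

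For step (i), I unpack the construction of $\mathsf{RSG}_\lambda^N$: for the salt-determined permutation $\rho$ of $N = 2^g$ elements (generated by the time-reversed Riffle Shuffle, Algorithm~\ref{alg:riffle}), the graph consists of $\lambda$ blocks, each with $g+1$ internal layers of $N$ vertices whose edges are obtained by tracing Riffle Shuffle trajectories; the max indegree of $3$ is inherited from the Thorp step. I would then show that each block is an $N$-Superconcentrator: given any $V_1, V_2$ of equal size $k$ among the $N$ inputs and $N$ outputs of the block, $k$ vertex-disjoint paths are exhibited by tracing $V_1$ forward and $V_2$ backward through the $g$ levels of the Riffle Shuffle and using a Hall-type matching at each level. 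Disjointness is preserved because each level applies a fixed bijection to $\{0,1\}^g$-indexed vertices, so tracking disjoint subsets of vertices through a single level remains disjoint. Stacking $\lambda$ such blocks gives an $(N,\lambda)$-Superconcentrator of indegree at most $3$.

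For step (ii), I adapt the pebbling lower bound directly. Take any legal sequential pebbling $P_0, \ldots, P_T$ with $\max_i |P_i| \leq S \leq N/20$. Partition the time axis into $\lambda$ phases $\Phi_1, \ldots, \Phi_\lambda$, where $\Phi_j$ is the minimal interval after which every output of block $j$ has held a pebble at least once. At the onset of $\Phi_j$ at most $S$ of the $N$ block-inputs carry pebbles, so the $N$-Superconcentrator property of block $j$ forces $N - S \geq 19N/20$ vertex-disjoint paths through the block along which fresh pebbles must be placed during $\Phi_j$. The standard Lengauer--Tarjan counting over these disjoint paths, combined with the fact that only $S$ pebbles can be alive simultaneously, yields a multiplicative "re-pebbling" factor of at least $\lambda N / 64 S$ per layer (the constant $64$ absorbs worst-case overlaps between the disjoint paths and the active pebble set). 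Unrolling this multiplicative recursion across all $\lambda$ phases, with the trivial base contribution of $N$ placements (each of the $N$ final sinks must be pebbled once), yields $T \geq N (\lambda N / 64 S)^\lambda$.

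The main obstacle is the superconcentrator verification in step (i): one must argue that the Riffle-based trajectory graph, despite the salt-dependent permutation $\rho$, always provides vertex-disjoint paths between arbitrary equinumerous input and output subsets. This is where the bit-by-bit refinement structure of the (time-reversed) Riffle/Thorp shuffle enters crucially, essentially showing that $\mathsf{RSG}_\lambda^N$ is a salt-randomized butterfly-style network. Once that structural claim is pinned down, the pebbling accounting in step (ii) is a direct amortization argument mirroring the classical Lengauer--Tarjan analysis; the only bookkeeping subtlety is keeping the constant $64$ and the exponent $\lambda$ aligned through the unrolled recursion so that the final bound matches the lemma statement exactly.
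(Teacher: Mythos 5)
The statement you are asked to prove is Theorem~\ref{thm:superconcentrator}, a general lower bound for pebbling an \emph{arbitrary} $(N,\lambda)$-Superconcentrator; the paper does not prove it at all but imports it verbatim from Lengauer and Tarjan \cite{Lengauer1982}, and then separately establishes (Theorem~\ref{thm:riffleConcentrator}) that the Double-Riffle-Graph is an $N$-Superconcentrator so that the theorem applies. Your step (i) therefore addresses a different statement (essentially Lemma~\ref{lem:result} via Theorem~\ref{thm:riffleConcentrator}): the theorem as written makes no reference to $\mathsf{RSG}_\lambda^N$, and nothing about Riffle trajectories, Hall-type matchings, or indegree $3$ is relevant to it. If your intent was to prove the general theorem, step (i) should be deleted; if your intent was to prove Lemma~\ref{lem:result}, you are still obliged to supply the content of step (ii), which is where the real difficulty lies.

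The genuine gap is in step (ii). You partition the time axis into $\lambda$ consecutive phases $\Phi_1,\dots,\Phi_\lambda$ and derive, within each phase, a lower bound on the number of placements from the superconcentrator property of one block. Any such flat partition can only yield an additive bound of the form $T \geq \sum_j(\text{work in }\Phi_j)$, i.e.\ something like $\lambda \cdot f(N,S)$ --- it cannot produce the multiplicative bound $N(\lambda N/64S)^{\lambda}$, which is exponential in $\lambda$. The Lengauer--Tarjan argument is a nested recursion: pebbling the outputs of block $\lambda$ with only $S$ pebbles forces many placements on its inputs, each batch of which constitutes a fresh demand to re-pebble many outputs of block $\lambda-1$ from a configuration of at most $S$ pebbles, and so on down the stack; the per-level factor $\lambda N/(64S)$ (note that $\lambda$ appears inside the base of the exponent) comes from optimizing the interval sizes in that recursion, not from ``absorbing worst-case overlaps.'' As written, your accounting asserts the multiplicative factor rather than deriving it, and the phase structure you set up does not support the recursion needed to obtain it.
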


%
%
%
%
%
%

\subsection{Tools for parallel attacks}
We build upon the results from \cite{Alwen2017}, hence we shall recall the
definitions used therein.
%

\begin{definition}[Depth-Robustness \cite{Alwen2017}]\label{def:depthRobustness}
For $n \in \mathcal{N}$ and $e, d \in [n]$ a DAG $G = (V, E)$ is $(e, 
d)$-depth-robust if 
\[
 \forall S \subset V \quad |S| \leq e \Rightarrow \textsf{depth}(G-S) \geq d.
\]
\end{definition}

\noindent
For $(e,d)$-depth-robust graph we have
\begin{theorem}[Thm. 4 ~\cite{Alwen2017}]\label{thm:thm4}
Let $G$ be an $(e, d)$-depth-robust DAG. Then we have  $\Pi_{cc}^{||}(G) > ed$.

\end{theorem}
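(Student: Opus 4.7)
The plan is to reduce the claim to Theorem~\ref{thm:thm4}: I would prove that $\rsg$ is $(e,d)$-depth-robust for parameters satisfying $ed = \Omega(n^{1.5}/\sqrt{g\lambda})$, and then apply the theorem, which gives $\Pi_{cc}^{||}(G) > ed$ for any $(e,d)$-depth-robust DAG $G$.

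Step 1 (structure). By construction $\rsg$ is a layered DAG of bounded indegree $3$ on $n = 2^g(2\lambda g+1)$ vertices arranged in $2\lambda g+1$ layers of $N=2^g$ vertices, obtained by concatenating $2\lambda$ copies of a depth-$g$ Riffle-Shuffle graph. Earlier sections of the paper establish that with high probability over the salt $s$ each of these single-pass graphs is an $N$-superconcentrator; consequently $\rsg$ is (w.h.p.) an $(N, 2\lambda)$-superconcentrator of total depth $D = 2\lambda g$.

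Step 2 (depth-robustness of stacked superconcentrators). For any deletion set $S$ with $|S|\le e$, I would group the $2\lambda$ stacks into consecutive blocks and use pigeonhole to locate a block that absorbs only a proportionally small share of $S$. Inside that lightly-damaged block, the $N$-superconcentrator property supplies vertex-disjoint routes from a large subset of its surviving inputs to a large subset of its surviving outputs; these routes can then be extended through the undamaged portions of the adjacent stacks to produce a long directed path in $\rsg \setminus S$. Optimising the block size against the deletion budget yields a trade-off curve whose best point satisfies $ed = \Omega(n^{1.5}/\sqrt{g\lambda})$.

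Step 3 (apply Theorem~\ref{thm:thm4}). Inserting the $(e,d)$ obtained in Step~2 into Theorem~\ref{thm:thm4} gives $\picc{\rsg} > ed = \Omega(n^{1.5}/\sqrt{g\lambda})$, which is exactly the claim.

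The main obstacle is Step~2, specifically pinning down the trade-off so that the denominator is $\sqrt{g\lambda}$ rather than $g\sqrt{g\lambda}$ (as for Catena's Butterfly Graph). A worst-case deletion can simultaneously cripple many of the superconcentrator-guaranteed vertex-disjoint paths, so the bare superconcentrator property will not be enough; I expect to have to exploit the stronger expansion coming from the random permutation produced by the Riffle Shuffle — precisely the probabilistic structure developed in the mixing-time analysis earlier in the paper — in order to certify that the lightly-damaged block still admits a path of the length required by the trade-off.
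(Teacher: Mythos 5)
You have not proved the statement you were asked to prove. The target is Theorem~\ref{thm:thm4} itself: \emph{for an arbitrary $(e,d)$-depth-robust DAG $G$, $\Pi_{cc}^{||}(G) > ed$}. Your proposal instead treats that theorem as a black box (``apply Theorem~\ref{thm:thm4}'') and uses it to derive the bound $\picc{\rsg} = \Omega(n^{1.5}/\sqrt{g\lambda})$, which is Lemma~\ref{lem:main_par}, a different statement. As an argument for Theorem~\ref{thm:thm4} this is circular; as an argument for Lemma~\ref{lem:main_par} it is answering the wrong question. The theorem is a general pebbling fact (imported by the paper from Alwen--Blocki--Pietrzak, where it is proved by a direct argument on legal parallel pebblings: one shows that during the pebbling of a depth-$d$ path surviving any small deletion set, at least $e$ pebbles must be kept on the graph for at least $d$ distinct steps, so $\sum_i |P_i| > ed$). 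Nothing about Riffle Shuffle, superconcentrators, or the salt enters into it.

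Even read charitably as a proof of Lemma~\ref{lem:main_par}, your route differs from the paper's and has a real hole where you yourself flag one. The paper does not establish depth-robustness of $\rsg$ at all; it proves the parallel bound via the dispersed-graph machinery, namely Lemma~\ref{lem:par_rsg_uwagi} (nodes-disjoint paths of length $2c$ between any equal-sized subsets of the first and last layers of each stack, from the superconcentrator property) feeding into Lemma~\ref{lem:par_D} ($\rsg \in \mathcal{D}_{1,g}^{\lambda,n'}$ with $g = \lceil\sqrt{n'}\rceil$) and then Theorem~\ref{thm:thm6}, which gives $\picc{G} \ge \varepsilon\lambda g(\tfrac{k}{2}-g)$ directly. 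Your Step~2 --- extracting an $(e,d)$ pair with $ed = \Omega(n^{1.5}/\sqrt{g\lambda})$ from the stacked-superconcentrator structure --- is exactly the part you leave unproved, and it is not obviously recoverable: a superconcentrator need not be depth-robust with the parameters you need, which is presumably why the source paper (and this one) route the argument through dispersions rather than depth-robustness for these graphs.
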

We can obtain better bounds on $\Pi_{cc}^{||}(G)$ having more assumptions on the structure of $G$.

\begin{definition}[Dependencies \cite{Alwen2017}]\label{def:dep} Let $G=(V,E)$ be a DAG and $L\subseteq V$. We say that 
 $L$ has a $(z,g)-$dependency if there exist nodes-disjoint paths $p_1,\ldots,p_z$ of length at least $g$ each ending in $L$.
\end{definition}

\begin{definition}[Dispersed Graph \cite{Alwen2017}] Let $g\geq k$ be positive integers. A DAG $G$ is called $(g,k)-$dispersed
if there exists ordering of its nodes such that the following holds. Let $[k]$ denote last $k$ nodes in the 
ordering of $G$ and let $L_j=[jg,(j+1)g-1]$ be the $j^{th}$ subinterval. Then $\forall j\in [\lfloor k/g \rfloor]$ the interval 
$L_j$ has a $(g,g)-$dependency. More generally, let $\varepsilon\in(0,1]$. If each interval $L_j$ 
only has an $(\varepsilon g,g)-$dependency, then $G$ is called $(\varepsilon, g,k)$-dispersed.
\end{definition}

\begin{definition}[Stacked Dispersed Graphs \cite{Alwen2017}] A DAG $G\in \mathbb{D}_{\varepsilon,g}^{\lambda,k}$ 
if there exist $\lambda\in\mathbb{N}^+$ disjoint subsets of nodes $\{L_i\subseteq V\}$, each of size $k$,
with the following two properties:
\begin{enumerate}
 \item For each $L_i$ there is a path running through all nodes of $L_i$.
 \item Fix any topological ordering of nodes of $G$. For each $i\in[\lambda]$ let $G_i$ be the sub-graph of $G$ 
 containing all nodes of $G$ up to the last node of $L_i$. Then $G_i$ is an $(\varepsilon, g, k)-$dispersed graph.
\end{enumerate}
\end{definition}
\noindent
For stacked dispersed graphs we have
\begin{theorem}[Thm. 4 in \cite{Alwen2017}]\label{thm:thm6}
Let $G$ be a DAG such that $ G \in \mathcal{D}_{\varepsilon, g}^{\lambda, k} $. 
Then we have 
$$   \picc{G} \geq  \varepsilon \lambda g (\frac{k}{2} - g).$$
\end{theorem}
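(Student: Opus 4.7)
The plan is to derive Lemma~\ref{lem:main_par} by instantiating Theorem~\ref{thm:thm6}: I want to show that $\rsg$ lies in a class $\mathcal{D}_{\varepsilon, g'}^{\lambda', k}$ of stacked dispersed graphs for appropriate parameters, from which the claimed cumulative complexity lower bound follows at once. Throughout write $N=2^g$, so that $n = N(2\lambda g + 1)$.

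First I would describe the layered structure exploited by the argument. The graph $\rsg$ is naturally a concatenation of $2\lambda$ passes, where each pass is a $g$-layer DAG on $N$ nodes per layer induced by the salt-dependent permutation $\rho$ produced by the time-reversed Riffle Shuffle; the max-indegree of $\rsg$ is~$3$. The core structural fact I would rely on is that each single Riffle Shuffle pass, regarded as a DAG between its $N$ inputs and $N$ outputs, is an $N$-superconcentrator in the sense of Definition~\ref{def:superc} -- this is precisely the ingredient behind the sequential trade-off of Lemma~\ref{lem:result}.

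The substantive step, and the main obstacle, is to choose the subsets $L_1,\dots,L_{\lambda'} \subseteq V$, each of size $k$, that witness membership in $\mathcal{D}_{\varepsilon,g'}^{\lambda',k}$. Property~(1) (``a path running through all nodes of $L_i$'') is easy -- every pass admits long input-to-output directed paths, and by concatenating across passes one can place each $L_i$ along a directed path of $\rsg$. Property~(2) is the hard part: after fixing a topological ordering, every window of $g'$ consecutive nodes among the last $k$ of the ordering of each prefix $G_i$ must admit a $(\varepsilon g', g')$-dependency, i.e.\ $\varepsilon g'$ node-disjoint paths of length at least $g'$ ending in the window. This requires a quantitative version of the statement that small vertex-deletions cannot destroy the $N$-superconcentrator structure of a Riffle Shuffle pass: if only $o(N)$ nodes are removed then for any window the remaining graph still contains $\Omega(g')$ vertex-disjoint paths of length $\geq g'$ terminating inside the window. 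The argument is the analogue of the one used for Catena's butterfly and dragonfly graphs in the proof of Theorem~7 of~\cite{Alwen2017}; the gain of a factor of $g$ over the $\textsf{BFG}^\lambda_n$ bound is attributable to the fact that a RiffleScrambler pass achieves superconcentration in only $g$ layers rather than $2g-1$.

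Once the parameters $\lambda'$, $g'$, $k$, $\varepsilon$ are chosen so that they witness membership in the stacked dispersed class and satisfy $\varepsilon \lambda' g'(k/2 - g') = \Theta(n^{1.5}/\sqrt{g\lambda})$, Theorem~\ref{thm:thm6} yields
\[
\picc{\rsg} \;\geq\; \varepsilon \lambda' g'\!\left(\frac{k}{2} - g'\right) \;=\; \Omega\!\left(\frac{n^{1.5}}{\sqrt{g\lambda}}\right),
\]
which is the bound asserted in Lemma~\ref{lem:main_par}.
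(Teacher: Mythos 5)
Your proposal does not prove the statement in question. Theorem~\ref{thm:thm6} is a general assertion about an \emph{arbitrary} DAG $G$ in the class $\mathcal{D}_{\varepsilon,g}^{\lambda,k}$: it claims that any legal parallel pebbling of such a $G$ must accumulate cumulative cost at least $\varepsilon\lambda g(\frac{k}{2}-g)$. What you have written instead is a plan for \emph{applying} this theorem to the specific graph $\rsg$ in order to derive Lemma~\ref{lem:main_par} -- i.e.\ you assume the very statement you were asked to establish and use it as a black box. Nowhere do you engage with pebbling configurations, with the cost $\Pi_{cc}(P)=\sum_i |P_i|$, or with why the two defining properties of a stacked dispersed graph force any pebbling to be expensive. (In the paper this theorem carries no proof either: it is imported verbatim from \cite{Alwen2017}, and the material you sketch corresponds to the paper's Lemmas~\ref{lem:par_rsg_uwagi} and~\ref{lem:par_D}, not to Theorem~\ref{thm:thm6}.)

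To actually prove the theorem you would need an argument of roughly the following shape. Fix $i\in[\lambda]$ and a legal parallel pebbling $P$. Because a directed path runs through all $k$ nodes of $L_i$, condition~2 of Definition~\ref{def:pebbling} forces the first-pebbling times of the nodes of $L_i$ to be spread over at least $k$ distinct steps. Partition the last $k$ nodes of the relevant prefix $G_i$ into the intervals $L_j$ of length $g$; during the (at least $g$) steps in which the pebbling completes a given interval $L_j$, the $(\varepsilon g, g)$-dependency guarantees $\varepsilon g$ node-disjoint paths of length at least $g$ ending in $L_j$, and each such path must already carry at least one pebble throughout those steps (otherwise the path could not be traversed in time). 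This charges roughly $\varepsilon g\cdot g$ units of cumulative cost to each interval; summing over the $\lfloor k/g\rfloor$ intervals, correcting for overlaps between the charging windows of consecutive intervals (which is where the factor $\frac{k}{2}-g$ rather than $k-g$ arises), and then summing over the $\lambda$ disjoint sets $L_i$ yields the bound. None of this appears in your proposal, so as a proof of Theorem~\ref{thm:thm6} it is missing in its entirety.
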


\section{Riffle Scrambler}\label{sec:rifflePass}
The \rifflePass function uses the following parameters:
\begin{itemize}
 \item $s$ is a salt which is used to generate a graph $\GG$,
 \item $g$ - garlic, 
$\GG = \tuple{V, E}$, \ie $V = V_0 \cup \ldots \cup V_{2 \lambda g}$, $|V_i| = 
2^g$,
 \item $\lambda$ - is the number of layers of the graph $\GG$.
\end{itemize}

Let $HW(x)$ denote the Hamming weight of a binary string $x$ (\ie the number of ones in $x$), and $\bar x$ denotes a coordinate-wise complement of $x$ (thus  $HW(\bar{x})$
    denotes number of zeros in $x$).


\begin{definition}
 Let $B = (b_0 \ldots b_{n-1})\in\{0,1\}^n$ (a binary word of length $n$).
We define a rank $r_{B}(i)$ of a bit $i$ in $B$ as
\[ r_B(i) = |\{j < i: b_j = b_i\}|.\]
\end{definition}

\begin{definition}[Riffle-Permutation]
Let $B = (b_0 \ldots b_{n-1})$ be a binary word of length $n$.
A permutation $\pi$  induced by $B$ is defined as
\[
\pi_{B}(i) = \left\{\begin{array}{l l}
		     r_B(i) & \textrm{if } b_i = 0,\\
		     r_B(i) + HW(\bar{B}) & \textrm{if } b_i = 1\\
                    \end{array}\right.
\]
for all $0\leq 
i \leq n-1$.
\end{definition}

\begin{example}\label{ex:1}
Let $B = 11100100$, then $r_B(0) = 0$, $r_B(1) = 1$, $r_B(2) = 2$, $r_B(3) = 0$,
$r_B(4) = 1$, $r_B(5) = 3$, $r_B(6) = 2$, $r_B(7) = 3$.
 The Riffle-Permutation induced by $B$ is equal to 
$ \pi_B = \left(\begin{array}{c c c c c c c c}
         0 & 1 & 2 & 3 & 4 & 5 & 6 & 7\\
          4 & 5 & 6 & 0 & 1 & 7 & 2 & 3\\
         \end{array}\right)$. For graphical illustration of this example see Fig. \ref{fig:perm}.

\tikzstyle{block3} = [draw,fill=gray!20,minimum size=0.5em]
\begin{figure}
 \center
 \begin{tikzpicture}[->,>=stealth',shorten >=1pt,node distance=2.5cm,auto,main node/.style={rectangle,rounded corners,draw,align=center}]

\def \n {8}

\foreach \y in {2,3}
{
  \foreach \x in {1,...,\n}
  {
    \pgfmathtruncatemacro{\lx}{\x - 1}
    \pgfmathtruncatemacro{\ly}{\y - 2}
    \pgfmathtruncatemacro{\label}{\x\y}
    \node[block3,draw, circle] (\label) at ({1.3*\x}, {-1.7*\y}) 
{$v_{\lx}^{\ly}$};
  }
}

\draw[node,->] 		(12)   to (53);

\draw[node,->] 		(22)   to (63);

\draw[node,->] 		(32)   to (73);

\draw[node,->] 		(42)   to (13);

\draw[node,->] 		(52)   to (23);

\draw[node,->] 		(62)   to (83);

\draw[node,->] 		(72)   to (33);

\draw[node,->] 		(82)   to (43);

\end{tikzpicture}
\caption{A graph of Riffle-Permutation induced by  $B = 11100100$.}\label{fig:perm}
\end{figure}
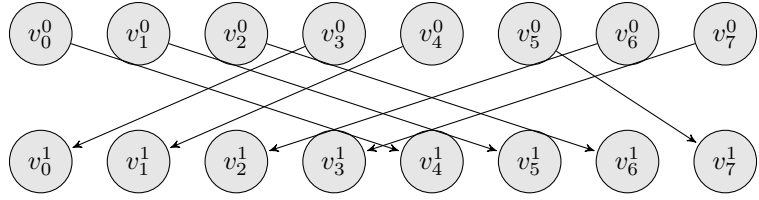

\end{example}

\begin{definition}[$N$-Single-Layer-Riffle-Graph]
 Let $\mathcal{V} = \mathcal{V}^0 \cup \mathcal{V}^1$ where  
$\mathcal{V}^i = \{v_{0}^i, \ldots, v_{N-1}^i\}$ and let $B$ be
an $N$-bit word. Let $\pi_B$ be a Riffle-Permutation induced by $B$.
We define $N$-Single-Layer-Riffle-Graph (for even $N$) via the set of edges 
$E$:
\begin{itemize}
 \item $1$ edge: $v_{N-1}^0 \rightarrow v_{0}^1$,
 \item $N$ edges: $v_i^0 \rightarrow v_{\pi_B(i)}^1$ for $i = 0, \ldots, N-1$,
 \item $N$ edges: $v_i^0 \rightarrow v_{\pi_{\bar{B}}(i) }^1$ for 
$i 
= 0, \ldots, N-1$.
\end{itemize}
\end{definition}

\begin{example}
 Continuing with $B$ and $\pi_B$ from Example \ref{ex:1}, the $8$-Single-Layer-Riffle-Graph is 
 presented in Fig. \ref{fig:layer}.
\tikzstyle{block3} = [draw,fill=gray!20,minimum size=0.5em]
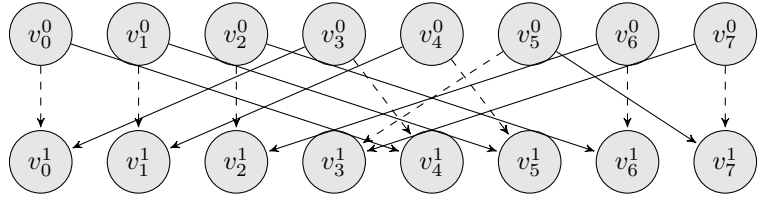
\begin{figure}
 \center
 \begin{tikzpicture}[->,>=stealth',shorten >=1pt,node distance=2.5cm,auto,main node/.style={rectangle,rounded corners,draw,align=center}]

\def \n {8}

\foreach \y in {2,3}
{
  \foreach \x in {1,...,\n}
  {
    \pgfmathtruncatemacro{\lx}{\x - 1}
    \pgfmathtruncatemacro{\ly}{\y - 2}
    \pgfmathtruncatemacro{\label}{\x\y}
    \node[block3,draw, circle] (\label) at ({1.3*\x}, {-1.7*\y}) 
{$v_{\lx}^{\ly}$};
  }
}


\draw[node,->] 		(12)   to (53);
\draw[node,dashed,->] 	(12)   to (13);

\draw[node,->] 		(22)   to (63);
\draw[node,dashed,->] 	(22)   to (23);

\draw[node,->] 		(32)   to (73);
\draw[node,dashed,->] 	(32)   to (33);

\draw[node,->] 		(42)   to (13);
\draw[node,dashed,->] 	(42)   to (53);

\draw[node,->] 		(52)   to (23);
\draw[node,dashed,->] 	(52)   to (63);

\draw[node,->] 		(62)   to (83);
\draw[node,dashed,->] 	(62)   to (43);

\draw[node,->] 		(72)   to (33);
\draw[node,dashed,->] 	(72)   to (73);

\draw[node,->] 		(82)   to (43);
\draw[node,dashed,->] 	(82)   to (83);

\end{tikzpicture}
\caption{An 8-Single-Layer-Riffle-Graph (with horizontal edges and edge $(v^0_7,v^1_0)$ skipped) for $B = 11100100$.
Permutation $\pi_B$ is depicted with solid lines, whereas $\pi_{\bar{B}}$ is presented with dashed lines.
%
}\label{fig:layer}
\end{figure}

\end{example}

%

%
%

Algorithm~\ref{alg:genConcentrator} is responsible for generating an
$N$-Double-Riffle-Graph  which is defined in the following way.
%
%
From now on, we assume that $N=2^g$.
\begin{definition}[$N$-Double-Riffle-Graph]
Let $V$ denote the set of vertices and $E$ be the set 
of edges of $\GG = (V, E)$. 
Let $B_0, \ldots, B_{g-1}$ be $g$ binary words of the length $2^g$ each.
Then $N$-Double-Riffle-Graph is obtained by stacking $2g$ 
Single-Layer-Riffle-Graphs resulting in a graph consisting of $(2g+1)2^g$ vertices
%
\begin{itemize}
 \item 
$ \{v_0^0, \ldots, v_{2^g-1}^0\} \cup \ldots \cup \{v_0^{2g}, \ldots, 
v_{2^g-1}^{2g}\},
$
\end{itemize}
and edges:
\begin{itemize}
 \item $(2g+1) 2^g$ edges: $v_{i-1}^j \rightarrow v_i^j$ for $i \in \{1, 
\ldots, 
2^g-1\}$ and $j \in \{0, 1, \ldots, 2^g\}$,
 \item $2g$ edges: $v_{2^g-1}^j \rightarrow v_{0}^{j+1}$ for $j \in \{0, 
\ldots, 2g-1\}$,
 \item $g 2^g$ edges: $v_i^{j-1} \rightarrow v_{\pi_{B_j}(i)}^j$ for $i \in \{ 0, 
\ldots, 2^g-1\}$, $j \in \{ 1, \ldots, g\}$,
 \item $g 2^g$ edges: $v_i^{j-1} \rightarrow v_{ \pi_{\bar{B_j}}(i) }^j $ for $i 
\in \{ 0, 
\ldots, 2^g-1\}$,  $j \in\{ 1, \ldots, g\}$,

\end{itemize}
and the following edges for the lower $g$ layers -- which are symmetric with 
respect to the level $g$ (and where we use inverse of permutations induced by $B_j ,j\in\{0,\ldots,g-1\}$):
\begin{itemize}
 \item $g 2^g$ edges: $v_{\pi^{-1}_{B_j}(i)}^{2g-j} \rightarrow v_i^{2g-j+1} 
\rightarrow $ for $i \in\{ 0, 
\ldots, 2^g-1\}$,  $j \in \{ 1, \ldots, g$\},
 \item $g 2^g$ edges: 
$v_{ i }^{2g-j} \rightarrow v_{\pi^{-1}_{B_j}(i) }^{2g-j+1} $
for $i \in\{ 0, 
\ldots, 2^g-1\}$,  $j \in \{ 1, \ldots, g\}$.

\end{itemize}

\end{definition}

\begin{definition}[$(N, \lambda)$-Double-Riffle-Graph]
Let $\GG_i, i=0,\ldots, \lambda-1$ be $N$-Double-Riffle-Graphs. 
The $(N, \lambda)$-Double-Riffle-Graph is a graph obtained by stacking $\lambda$ 
times $N$-Double-Riffle-Graphs together by joining outputs of $\GG_i$ to the 
corresponding inputs of $\GG_{i+1}, i=0,\ldots,\lambda-2$.
\end{definition}


One of the ingredients of our main procedure is a construction of $(N,\lambda)$-Double-Riffle-Graph 
using specific binary words $B_0,\ldots, B_{g-1}$.
To continue, we have to 
introduce ''trajectory tracing``. For given  $\sigma$ -- a permutation of $\{0,\ldots,2^g-1\}$ -- let 
$\mathbf{B}$ be a binary matrix of size $2^g\times g$, whose $j$-th row is a binary representation of
$\sigma(j), j=0,\ldots,2^g-1$.
Denote by $B_i$ the $i$-th column of $\mathbf{B}$, \ie $\mathbf{B}=(B_0,\ldots, B_{2^g-1})$.
We call this matrix a \textsl{binary representation of $\sigma$}.
Then we create a binary matrix $\mathfrak{B}=(\mathfrak{B}_0,\ldots,\mathfrak{B}_{2^g-1})$ (also of 
size $2^g\times g$) in the following way. We set $\mathfrak{B}_0=B_0$. For $i=1,\ldots, 2^g-1$ we 
set $\mathfrak{B}_i=\pi_{\mathfrak{B}_{i-1}^T}(B_i)$. The procedure 
\traceTraj is given in Algorithm \ref{alg:TraceTrajectories}.

  Roughly speaking, the procedure  $\rifflePass(x, s, g, \lambda)$    works in the following way. 
\begin{itemize}
 \item For given salt $s$ it  calculates a pseudorandom permutation $\sigma$ (using inverse Riffle Shuffle),
 let $\mathbf{B}$ be its binary representation.
 \item It calculates $\mathfrak{B}=\traceTraj(\mathbf{B})$.
  \item It creates an instance of $N$-Double-Riffle-Graph ($2g+1$ rows of $2^g$ vertices) using $\mathfrak{B}^T_0,\ldots, 
  \mathfrak{B}^T_{g-1}$ as binary words.
  \item It \textsl{evaluates} $x$ on the graph, calculates values at nodes on last row, \ie 
  $v_0^{2g+1},\ldots, v^{2^g-1}_{2g+1}$.
  \item Last row is rewritten to first one, \ie $v^0_i=v^{2g+1}_i, i=0,\ldots 2^g-1$,
  and whole evaluation is repeated $\lambda$ times.
  \item Finally, the value at $v^{2g}_{2^g-1}$ is returned.
%
\end{itemize}
The main procedure $\rifflePass(x, s, g, \lambda)$  
for storing a password $x$ using salt $s$ and 
memory-hardness parameters $g,\lambda$ is given in Algorithm \ref{alg:init}.
%

\begin{example}\label{ex:full}
An example of $(8, 1)$-Double-Riffle-Graph which was obtained from a 
permutation
$ \sigma = \left(\begin{array}{c c c c c c c c}
         0 & 1 & 2 & 3 & 4 & 5 & 6 & 7\\
         5 & 4 & 6 & 3 & 2 & 7 & 0 & 1\\
\end{array}\right)$.
Its binary representation is the following:
$$\mathbf{B}=(B_0, B_1, B_2), \quad \mathbf{B}^T=
\left( 
\begin{array}{cccccccccccccccc}
 1& 1& 1& 0& 0& 1& 0& 0 \\
 0& 0& 1& 1& 1& 1& 0& 0 \\
 1& 0& 0& 1& 0& 1& 0& 1\\
\end{array}
\right).
$$ 

%
%
%
\noindent
 We obtain trajectories of the elements
 and we can derive words/permutations for each layer of the graph:
\begin{itemize}
 \item 
$\mathfrak{B}_0=B_0 = (11100100)^T$
(used in the previous examples) -- obtained by concatenating first 
digits of elements, we have 
$ \pi_{B_0} = \left(\begin{array}{c c c c c c c c}
         0 & 1 & 2 & 3 & 4 & 5 & 6 & 7\\
          4 & 5 & 6 & 0 & 1 & 7 & 2 & 3\\
         \end{array}\right)$.
\item $\mathfrak{B}_1=\pi_{\mathfrak{B}_0^T}(B_1)=\pi_{\mathfrak{B}_0^T}(11100100)= (11000011)^T$,
thus 
%
%
$ \pi_{\mathfrak{B}_1} = \left(\begin{array}{c c c c c c c c}
         0 & 1 & 2 & 3 & 4 & 5 & 6 & 7\\
          4 & 5 & 0 & 1 & 2 & 3 & 6 & 7\\
         \end{array}\right)$.
\item $\mathfrak{B}_2=\pi_{\mathfrak{B}_1^T}(B_2)=\pi_{\mathfrak{B}_1^T}(10010101)=(01011001)^T$,
thus  $\pi_{B_2} = \left(\begin{array}{c c c c c c c c}
         0 & 1 & 2 & 3 & 4 & 5 & 6 & 7\\
          0 & 4 & 5 & 1 & 6 & 2 & 3 & 7\\
         \end{array}\right)$.
\item 
Finally, 
$$\mathfrak{B}=(\mathfrak{B}_0, \mathfrak{B}_1, \mathfrak{B}_2), \quad \mathfrak{B}^T=
\left( 
\begin{array}{cccccccccccccccc}
 1& 1& 1& 0& 0& 1& 0& 0 \\
 1& 1& 0& 0& 0& 0& 1& 1 \\
 0& 1& 0& 1& 1& 0& 0& 1\\
\end{array}
\right).
$$

\end{itemize}
The resulting graph is given in Fig. \ref{fig:main}.
\par 
\begin{figure}
\center
  \begin{subfigure}{4.0cm}
    \tikzstyle{reversed} = []
\begin{tikzpicture}[->,>=stealth',shorten >=1pt,node distance=2.5cm,auto,main node/.style={rectangle,rounded corners,draw,align=center},scale=0.37]

\def \n {8}

\foreach \y in {1,...,7}
{
  \foreach \x in {1,...,\n}
  {
    \pgfmathtruncatemacro{\label}{\x\y}
      \pgfmathtruncatemacro{\rx}{\x + 1}
    \node[scale=1.0] (\label) at ({1.3*\x}, {-2.7*\y}) {\textbullet};

  }
}

\foreach \y in {1,...,7}
{
  \foreach \x in {1,...,7}
  {
    \pgfmathtruncatemacro{\label}{\x\y}
      \pgfmathtruncatemacro{\rx}{\x + 1}
 \draw[node,shorten <=-5,shorten >=-3,densely dotted,->] ({\x\y})   to ( {\rx\y});  
  }
}

\foreach \y in {1,...,6}
{
  \pgfmathtruncatemacro{\ry}{\y + 1}
  \draw[node,shorten <=-5,shorten >=-3,densely dotted,->] ({8\y})   to ( {1\ry});
}
 
\end{tikzpicture}
    \caption{sequential + connecting layers (constant)}
  \end{subfigure}$\qquad\qquad$
  \begin{subfigure}{5.0cm}
    \tikzstyle{reversed} = []
\begin{tikzpicture}[->,>=stealth',shorten >=1pt,node distance=2.5cm,auto,main node/.style={rectangle,rounded corners,draw,align=center},scale=0.37]

\def \n {8}

\foreach \y in {1,...,7}
{
  \foreach \x in {1,...,\n}
  {
    \pgfmathtruncatemacro{\label}{\x\y}
      \pgfmathtruncatemacro{\rx}{\x + 1}
    \node[scale=1.0] (\label) at ({1.3*\x}, {-2.7*\y}) {\textbullet};

  }
}

%
%
%
%

%
%

\draw[node,shorten <=-5,shorten >=-3,->] 		(11)   to (12);
\draw[node,shorten <=-5,shorten >=-3,reversed,->] 		(11)   to (52);

\draw[node,shorten <=-5,shorten >=-3,->] 		(21)   to (62);
\draw[node,shorten <=-5,shorten >=-3,reversed,->] 		(21)   to (22);

\draw[node,shorten <=-5,shorten >=-3,->] 		(31)   to (72);
\draw[node,shorten <=-5,shorten >=-3,reversed,->] 		(31)   to (32);

\draw[node,shorten <=-5,shorten >=-3,->] 		(41)   to (12);
\draw[node,shorten <=-5,shorten >=-3,reversed,->] 		(41)   to (52);

\draw[node,shorten <=-5,shorten >=-3,->] 		(51)   to (22);
\draw[node,shorten <=-5,shorten >=-3,reversed,->] 		(51)   to (62);

\draw[node,shorten <=-5,shorten >=-3,->] 		(61)   to (82);
\draw[node,shorten <=-5,shorten >=-3,reversed,->] 		(61)   to (42);

\draw[node,shorten <=-5,shorten >=-3,->] 		(71)   to (32);
\draw[node,shorten <=-5,shorten >=-3,reversed,->] 		(71)   to (72);

\draw[node,shorten <=-5,shorten >=-3,->] 		(81)   to (42);
\draw[node,shorten <=-5,shorten >=-3,reversed,->] 		(81)   to (82);

\draw[node,shorten <=-5,shorten >=-3,->] 		(12)   to (53);
\draw[node,shorten <=-5,shorten >=-3,reversed,->] 		(12)   to (13);

\draw[node,shorten <=-5,shorten >=-3,->] 		(22)   to (63);
\draw[node,shorten <=-5,shorten >=-3,reversed,->] 		(22)   to (23);

\draw[node,shorten <=-5,shorten >=-3,->] 		(32)   to (13);
\draw[node,shorten <=-5,shorten >=-3,reversed,->] 		(32)   to (53);

\draw[node,shorten <=-5,shorten >=-3,->] 		(42)   to (23);
\draw[node,shorten <=-5,shorten >=-3,reversed,->] 		(42)   to (63);

\draw[node,shorten <=-5,shorten >=-3,->] 		(52)   to (33);
\draw[node,shorten <=-5,shorten >=-3,reversed,->] 		(52)   to (73);

\draw[node,shorten <=-5,shorten >=-3,->] 		(62)   to (43);
\draw[node,shorten <=-5,shorten >=-3,reversed,->] 		(62)   to (83);

\draw[node,shorten <=-5,shorten >=-3,->] 		(72)   to (73);
\draw[node,shorten <=-5,shorten >=-3,reversed,->] 		(72)   to (33);

\draw[node,shorten <=-5,shorten >=-3,->] 		(82)   to (83);
\draw[node,shorten <=-5,shorten >=-3,reversed,->] 		(82)   to (43);

\draw[node,shorten <=-5,shorten >=-3,->] 		(13)   to (14);
\draw[node,shorten <=-5,shorten >=-3,reversed,->] 		(13)   to (54);

\draw[node,shorten <=-5,shorten >=-3,->] 		(23)   to (54);
\draw[node,shorten <=-5,shorten >=-3,reversed,->] 		(23)   to (14);

\draw[node,shorten <=-5,shorten >=-3,->] 		(33)   to (64);
\draw[node,shorten <=-5,shorten >=-3,reversed,->] 		(33)   to (24);

\draw[node,shorten <=-5,shorten >=-3,->] 		(43)   to (24);
\draw[node,shorten <=-5,shorten >=-3,reversed,->] 		(43)   to (64);

\draw[node,shorten <=-5,shorten >=-3,->] 		(53)   to (74);
\draw[node,shorten <=-5,shorten >=-3,reversed,->] 		(53)   to (34);

\draw[node,shorten <=-5,shorten >=-3,->] 		(63)   to (34);
\draw[node,shorten <=-5,shorten >=-3,reversed,->] 		(63)   to (74);

\draw[node,shorten <=-5,shorten >=-3,->] 		(73)   to (44);
\draw[node,shorten <=-5,shorten >=-3,reversed,->] 		(73)   to (84);

\draw[node,shorten <=-5,shorten >=-3,->] 		(83)   to (84);
\draw[node,shorten <=-5,shorten >=-3,reversed,->] 		(83)   to (44);

\draw[node,shorten <=-5,shorten >=-3,->] 			(14) to (15)   ;
\draw[node,shorten <=-5,shorten >=-3,reversed,->] 		(54) to (15)   ;

\draw[node,shorten <=-5,shorten >=-3,->] 			(54) to (25)   ;
\draw[node,shorten <=-5,shorten >=-3,reversed,->] 		(14) to (25)   ;

\draw[node,shorten <=-5,shorten >=-3,->] 			(64) to (35)   ;
\draw[node,shorten <=-5,shorten >=-3,reversed,->] 		(24) to (35)   ;

\draw[node,shorten <=-5,shorten >=-3,->] 			(24) to (45)   ;
\draw[node,shorten <=-5,shorten >=-3,reversed,->] 		(64) to (45)   ;

\draw[node,shorten <=-5,shorten >=-3,->] 			(74) to (55)   ;
\draw[node,shorten <=-5,shorten >=-3,reversed,->] 		(34) to (55)   ;

\draw[node,shorten <=-5,shorten >=-3,->] 			 (34) to (65)   ;
\draw[node,shorten <=-5,shorten >=-3,reversed,->] 		(74) to (65)   ;

\draw[node,shorten <=-5,shorten >=-3,->] 			(44) to (75)   ;
\draw[node,shorten <=-5,shorten >=-3,reversed,->] 		(84) to (75)   ;

\draw[node,shorten <=-5,shorten >=-3,->] 			(84) to (85)   ;
\draw[node,shorten <=-5,shorten >=-3,reversed,->] 		(44) to (85)  ;

\draw[node,shorten <=-5,shorten >=-3,->] 			(55) to (16)   ;
\draw[node,shorten <=-5,shorten >=-3,reversed,->] 		(15) to (16)   ;

\draw[node,shorten <=-5,shorten >=-3,->] 			(65) to (26)   ;
\draw[node,shorten <=-5,shorten >=-3,reversed,->] 		(25) to (26)   ;

\draw[node,shorten <=-5,shorten >=-3,->] 			(15) to (36)   ;
\draw[node,shorten <=-5,shorten >=-3,reversed,->] 		(55) to (36)   ;

\draw[node,shorten <=-5,shorten >=-3,->] 			(25) to (46)   ;
\draw[node,shorten <=-5,shorten >=-3,reversed,->] 		(65) to (46)   ;

\draw[node,shorten <=-5,shorten >=-3,->] 			(35) to (56)   ;
\draw[node,shorten <=-5,shorten >=-3,reversed,->] 		(75) to (56)   ;

\draw[node,shorten <=-5,shorten >=-3,->] 			(45) to (66)   ;
\draw[node,shorten <=-5,shorten >=-3,reversed,->] 		(85) to (66)   ;

\draw[node,shorten <=-5,shorten >=-3,->] 			(75) to (76)   ;
\draw[node,shorten <=-5,shorten >=-3,reversed,->] 		(35) to (76)   ;

\draw[node,shorten <=-5,shorten >=-3,->] 			(85) to (86)   ;
\draw[node,shorten <=-5,shorten >=-3,reversed,->] 		(45) to (86)   ;


\draw[node,shorten <=-5,shorten >=-3,->] 			(16) to (17)   ;
\draw[node,shorten <=-5,shorten >=-3,reversed,->] 		(56) to (17)   ;

\draw[node,shorten <=-5,shorten >=-3,->] 			(66) to (27)   ;
\draw[node,shorten <=-5,shorten >=-3,reversed,->] 		(26) to (27)   ;

\draw[node,shorten <=-5,shorten >=-3,->] 			(76) to (37)   ;
\draw[node,shorten <=-5,shorten >=-3,reversed,->] 		(36) to (37)   ;

\draw[node,shorten <=-5,shorten >=-3,->] 			(16) to (47)   ;
\draw[node,shorten <=-5,shorten >=-3,reversed,->] 		(56) to (47)   ;

\draw[node,shorten <=-5,shorten >=-3,->] 			(26) to (57)   ;
\draw[node,shorten <=-5,shorten >=-3,reversed,->] 		(66) to (57)   ;

\draw[node,shorten <=-5,shorten >=-3,->] 			(86) to (67)   ;
\draw[node,shorten <=-5,shorten >=-3,reversed,->] 		(46) to (67)   ;

\draw[node,shorten <=-5,shorten >=-3,->] 			(36) to (77)   ;
\draw[node,shorten <=-5,shorten >=-3,reversed,->] 		(76) to (77)   ;

\draw[node,shorten <=-5,shorten >=-3,->] 			(46) to (87)   ;
\draw[node,shorten <=-5,shorten >=-3,reversed,->] 		(86) to (87)   ;

\end{tikzpicture}
    \caption{Main part of $\textsf{GenGraph}$  $\quad\qquad\quad\qquad$ \ (salt dependent)}
    \hspace{2cm}
  \end{subfigure}
  \begin{subfigure}{14.0cm}
    \tikzstyle{reversed} = []
\begin{tikzpicture}[->,>=stealth',shorten >=1pt,node distance=2.5cm,auto,main node/.style={rectangle,rounded corners,draw,align=center},scale=0.7]

\def \n {8}

\foreach \y in {1,...,7}
{
  \foreach \x in {1,...,\n}
  {
    \pgfmathtruncatemacro{\label}{\x\y}
      \pgfmathtruncatemacro{\rx}{\x + 1}
    \node[scale=1.2] (\label) at ({1.55*\x}, {-2.4*\y}) {\textbullet};

  }
}

\foreach \y in {1,...,7}
{
  \foreach \x in {1,...,7}
  {
    \pgfmathtruncatemacro{\label}{\x\y}
      \pgfmathtruncatemacro{\rx}{\x + 1}
 \draw[node,shorten <=-5,shorten >=-3,densely dotted,->] ({\x\y})   to ( {\rx\y});  
 
  }
}

\foreach \y in {1,...,6}
{
  \foreach \x in {1,...,\n}
  {
    \pgfmathtruncatemacro{\label}{\x\y}
      \pgfmathtruncatemacro{\rx}{\x + 1}
      \pgfmathtruncatemacro{\dy}{\y + 1}
      

  }
}

\foreach \y in {1,...,5}
{
  \pgfmathtruncatemacro{\ry}{\y + 1}
  \draw[node,shorten <=-5,shorten >=-3,densely dotted,->] ({8\y})   to ( {1\ry});
}

\node[scale=1.2] (inp) at (1.5, -1.7) {$v_0^0$};
\node[scale=1.2] (inp) at (6.3, -1.7) {$v_i^0$};
\node[scale=1.2] (inp) at (8.0, -1.7) {$v_{i+1}^0$};
\node[scale=1.2] (inp) at (12.7, -1.7) {$v_{2^g-1}^0$};

\node[scale=1.2] (inp) at (0.8, -9.4) {$v_0^{j}$};
\node[scale=1.2] (inp) at (13.2, -9.4) {$v_{2^g-1}^{j}$};

\node[scale=1.2] (inp) at (0.8, -12.0) {$v_0^{j+1}$};
\node[scale=1.2] (inp) at (13.2, -12.0) {$v_{2^g-1}^{j+1}$};

\node[scale=1.2] (inp) at (1.5, -17.4) {$v_0^{2g}$};
\node[scale=1.2] (inp) at (12.7, -17.4) {$v_{2^g-1}^{2g}$};

\node[scale=1.2] (inp) at (-0.5, -2.4) {{\small input}};
\draw[node,shorten <=-2,shorten >=-2,->] (inp)   to (11);

\node[scale=1.2] (outp) at (14.5, -16.8) {{\small output}};
\draw[node,->] (87)   to (outp);

\draw[node,shorten <=-5,shorten >=-3,->] 		(11)   to (12);
\draw[node,shorten <=-5,shorten >=-3,reversed,->] 		(11)   to (52);

\draw[node,shorten <=-5,shorten >=-3,->] 		(21)   to (62);
\draw[node,shorten <=-5,shorten >=-3,reversed,->] 		(21)   to (22);

\draw[node,shorten <=-5,shorten >=-3,->] 		(31)   to (72);
\draw[node,shorten <=-5,shorten >=-3,reversed,->] 		(31)   to (32);

\draw[node,shorten <=-5,shorten >=-3,->] 		(41)   to (12);
\draw[node,shorten <=-5,shorten >=-3,reversed,->] 		(41)   to (52);

\draw[node,shorten <=-5,shorten >=-3,->] 		(51)   to (22);
\draw[node,shorten <=-5,shorten >=-3,reversed,->] 		(51)   to (62);

\draw[node,shorten <=-5,shorten >=-3,->] 		(61)   to (82);
\draw[node,shorten <=-5,shorten >=-3,reversed,->] 		(61)   to (42);

\draw[node,shorten <=-5,shorten >=-3,->] 		(71)   to (32);
\draw[node,shorten <=-5,shorten >=-3,reversed,->] 		(71)   to (72);

\draw[node,shorten <=-5,shorten >=-3,->] 		(81)   to (42);
\draw[node,shorten <=-5,shorten >=-3,reversed,->] 		(81)   to (82);

\draw[node,shorten <=-5,shorten >=-3,->] 		(12)   to (53);
\draw[node,shorten <=-5,shorten >=-3,reversed,->] 		(12)   to (13);

\draw[node,shorten <=-5,shorten >=-3,->] 		(22)   to (63);
\draw[node,shorten <=-5,shorten >=-3,reversed,->] 		(22)   to (23);

\draw[node,shorten <=-5,shorten >=-3,->] 		(32)   to (13);
\draw[node,shorten <=-5,shorten >=-3,reversed,->] 		(32)   to (53);

\draw[node,shorten <=-5,shorten >=-3,->] 		(42)   to (23);
\draw[node,shorten <=-5,shorten >=-3,reversed,->] 		(42)   to (63);

\draw[node,shorten <=-5,shorten >=-3,->] 		(52)   to (33);
\draw[node,shorten <=-5,shorten >=-3,reversed,->] 		(52)   to (73);

\draw[node,shorten <=-5,shorten >=-3,->] 		(62)   to (43);
\draw[node,shorten <=-5,shorten >=-3,reversed,->] 		(62)   to (83);

\draw[node,shorten <=-5,shorten >=-3,->] 		(72)   to (73);
\draw[node,shorten <=-5,shorten >=-3,reversed,->] 		(72)   to (33);

\draw[node,shorten <=-5,shorten >=-3,->] 		(82)   to (83);
\draw[node,shorten <=-5,shorten >=-3,reversed,->] 		(82)   to (43);

\draw[node,shorten <=-5,shorten >=-3,->] 		(13)   to (14);
\draw[node,shorten <=-5,shorten >=-3,reversed,->] 		(13)   to (54);

\draw[node,shorten <=-5,shorten >=-3,->] 		(23)   to (54);
\draw[node,shorten <=-5,shorten >=-3,reversed,->] 		(23)   to (14);

\draw[node,shorten <=-5,shorten >=-3,->] 		(33)   to (64);
\draw[node,shorten <=-5,shorten >=-3,reversed,->] 		(33)   to (24);

\draw[node,shorten <=-5,shorten >=-3,->] 		(43)   to (24);
\draw[node,shorten <=-5,shorten >=-3,reversed,->] 		(43)   to (64);

\draw[node,shorten <=-5,shorten >=-3,->] 		(53)   to (74);
\draw[node,shorten <=-5,shorten >=-3,reversed,->] 		(53)   to (34);

\draw[node,shorten <=-5,shorten >=-3,->] 		(63)   to (34);
\draw[node,shorten <=-5,shorten >=-3,reversed,->] 		(63)   to (74);

\draw[node,shorten <=-5,shorten >=-3,->] 		(73)   to (44);
\draw[node,shorten <=-5,shorten >=-3,reversed,->] 		(73)   to (84);

\draw[node,shorten <=-5,shorten >=-3,->] 		(83)   to (84);
\draw[node,shorten <=-5,shorten >=-3,reversed,->] 		(83)   to (44);

\draw[node,shorten <=-5,shorten >=-3,->] 			(14) to (15)   ;
\draw[node,shorten <=-5,shorten >=-3,reversed,->] 		(54) to (15)   ;

\draw[node,shorten <=-5,shorten >=-3,->] 			(54) to (25)   ;
\draw[node,shorten <=-5,shorten >=-3,reversed,->] 		(14) to (25)   ;

\draw[node,shorten <=-5,shorten >=-3,->] 			(64) to (35)   ;
\draw[node,shorten <=-5,shorten >=-3,reversed,->] 		(24) to (35)   ;

\draw[node,shorten <=-5,shorten >=-3,->] 			(24) to (45)   ;
\draw[node,shorten <=-5,shorten >=-3,reversed,->] 		(64) to (45)   ;

\draw[node,shorten <=-5,shorten >=-3,->] 			(74) to (55)   ;
\draw[node,shorten <=-5,shorten >=-3,reversed,->] 		(34) to (55)   ;

\draw[node,shorten <=-5,shorten >=-3,->] 			 (34) to (65)   ;
\draw[node,shorten <=-5,shorten >=-3,reversed,->] 		(74) to (65)   ;

\draw[node,shorten <=-5,shorten >=-3,->] 			(44) to (75)   ;
\draw[node,shorten <=-5,shorten >=-3,reversed,->] 		(84) to (75)   ;

\draw[node,shorten <=-5,shorten >=-3,->] 			(84) to (85)   ;
\draw[node,shorten <=-5,shorten >=-3,reversed,->] 		(44) to (85)  ;

\draw[node,shorten <=-5,shorten >=-3,->] 			(55) to (16)   ;
\draw[node,shorten <=-5,shorten >=-3,reversed,->] 		(15) to (16)   ;

\draw[node,shorten <=-5,shorten >=-3,->] 			(65) to (26)   ;
\draw[node,shorten <=-5,shorten >=-3,reversed,->] 		(25) to (26)   ;

\draw[node,shorten <=-5,shorten >=-3,->] 			(15) to (36)   ;
\draw[node,shorten <=-5,shorten >=-3,reversed,->] 		(55) to (36)   ;

\draw[node,shorten <=-5,shorten >=-3,->] 			(25) to (46)   ;
\draw[node,shorten <=-5,shorten >=-3,reversed,->] 		(65) to (46)   ;

\draw[node,shorten <=-5,shorten >=-3,->] 			(35) to (56)   ;
\draw[node,shorten <=-5,shorten >=-3,reversed,->] 		(75) to (56)   ;

\draw[node,shorten <=-5,shorten >=-3,->] 			(45) to (66)   ;
\draw[node,shorten <=-5,shorten >=-3,reversed,->] 		(85) to (66)   ;

\draw[node,shorten <=-5,shorten >=-3,->] 			(75) to (76)   ;
\draw[node,shorten <=-5,shorten >=-3,reversed,->] 		(35) to (76)   ;

\draw[node,shorten <=-5,shorten >=-3,->] 			(85) to (86)   ;
\draw[node,shorten <=-5,shorten >=-3,reversed,->] 		(45) to (86)   ;


\draw[node,shorten <=-5,shorten >=-3,->] 			(16) to (17)   ;
\draw[node,shorten <=-5,shorten >=-3,reversed,->] 		(56) to (17)   ;

\draw[node,shorten <=-5,shorten >=-3,->] 			(66) to (27)   ;
\draw[node,shorten <=-5,shorten >=-3,reversed,->] 		(26) to (27)   ;

\draw[node,shorten <=-5,shorten >=-3,->] 			(76) to (37)   ;
\draw[node,shorten <=-5,shorten >=-3,reversed,->] 		(36) to (37)   ;

\draw[node,shorten <=-5,shorten >=-3,->] 			(16) to (47)   ;
\draw[node,shorten <=-5,shorten >=-3,reversed,->] 		(56) to (47)   ;

\draw[node,shorten <=-5,shorten >=-3,->] 			(26) to (57)   ;
\draw[node,shorten <=-5,shorten >=-3,reversed,->] 		(66) to (57)   ;

\draw[node,shorten <=-5,shorten >=-3,->] 			(86) to (67)   ;
\draw[node,shorten <=-5,shorten >=-3,reversed,->] 		(46) to (67)   ;

\draw[node,shorten <=-5,shorten >=-3,->] 			(36) to (77)   ;
\draw[node,shorten <=-5,shorten >=-3,reversed,->] 		(76) to (77)   ;

\draw[node,shorten <=-5,shorten >=-3,->] 			(46) to (87)   ;
\draw[node,shorten <=-5,shorten >=-3,reversed,->] 		(86) to (87)   ;

\end{tikzpicture}
    \caption{Final graph generated by  $\textsf{GenGraph}$}
  \end{subfigure}
  \caption{Instance of (8,1)-Double-Riffle-Graph  the graph}\label{fig:main}
\end{figure}
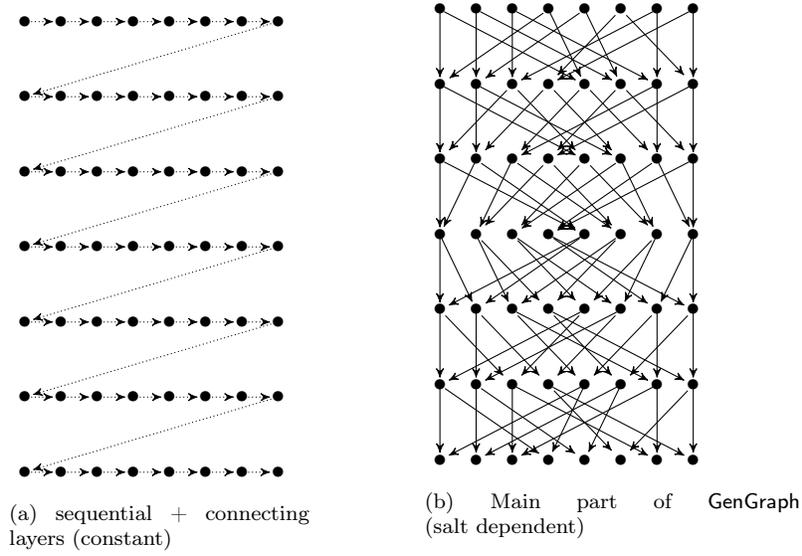
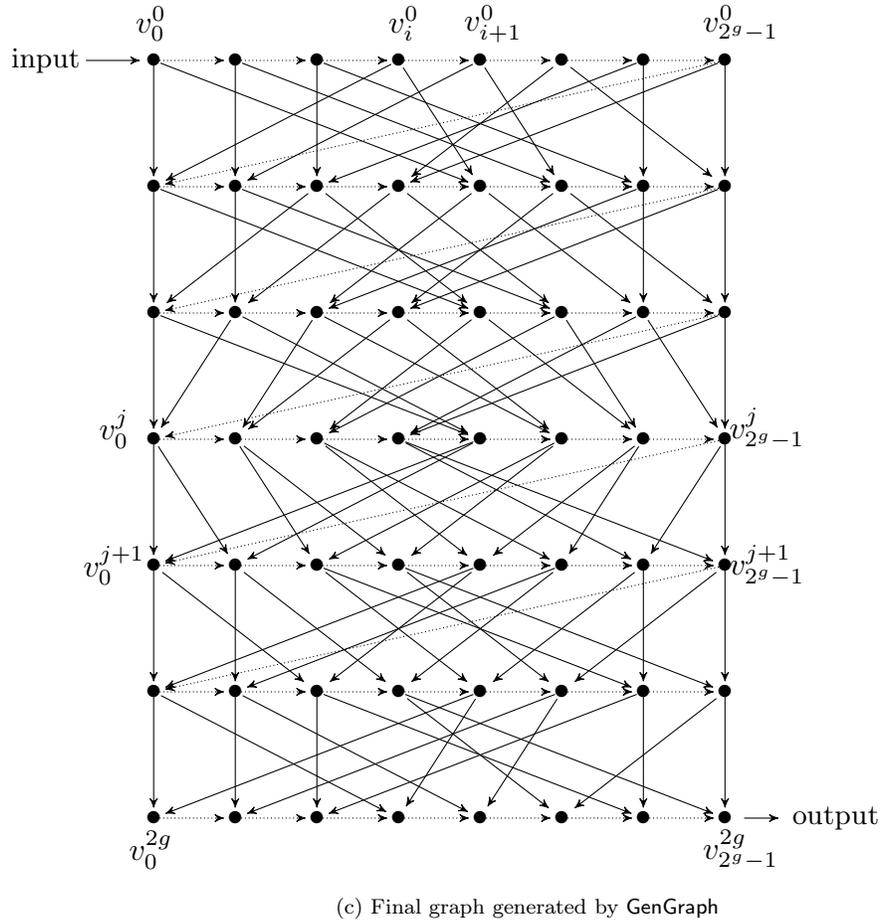


\end{example}

\subsection{Pseudocodes}
\begin{algorithm}[H]
\caption{$\textsf{RiffleShuffle}_H(n, s)$}\label{alg:riffle}
\begin{algorithmic}[1]
\State $\pi = \tuple{1, \ldots, n}$
\For{$i = 1$ to $n$}
  \For{$j = 1$ to $i-1$}
  \State $M[i, j] = 0$
  \EndFor
\EndFor

\While{$\exists_{1 \leq i \leq n} \exists_{1\leq j < i} M[i,j] = 0$}
  \State $\mathcal{S}_0, \mathcal{S}_1 = \emptyset$
  \For{$w:=1$ to $n$}
    \State $b = H(s, w, r)_1$
    \State $\mathcal{S}_{b} = \mathcal{S}_{b} \uplus \pi[w]$
  \EndFor
  \For{$i \in \mathcal{S}_0$}
    \For{$j \in \mathcal{S}_1$}
      \State $M[\max(i, j), \min(i,j)] = 1$
  \EndFor
  \EndFor
  \State $\pi = \mathcal{S}_0 \uplus \mathcal{S}_1$
\EndWhile \\
\Return $\pi$
\end{algorithmic}
\end{algorithm}
%

\begin{algorithm}[H]
\caption{$\traceTraj(\mathbf{B})$}\label{alg:TraceTrajectories}
\begin{algorithmic}[1]
\Require: $\mathbf{B}=(B_0,\ldots,B_{g-1})$ $\{\textsf{binary matrix of size}$ $2^g\times g \ \textsf{with columns } B_i, i=0,\ldots, g-1\}$
 
\Ensure: $\mathfrak{B}$ $\{\textsf{binary matrix of size}$ $2^g\times g \textsf{ with recalculated trajectories}\}$
\State $\mathfrak{B}_0=B_0$
\For{$i:=1$ to $g-1$}
\State $\mathfrak{B}_i=\pi_{\mathfrak{B}^T_{i-1}}(B_i^T)$ $\qquad // \textsl{Riffle-Permutation induced by } B_i^T$
\EndFor \\
\Return $\mathfrak{B}=(\mathfrak{B}_0,\ldots,\mathfrak{B}_{g-1})$
\end{algorithmic}
\end{algorithm}

\begin{algorithm}[H]
\caption{$\textsf{GenGraph}_H(g, \sigma)$}\label{alg:genConcentrator}
\begin{algorithmic}[1]
\State $N = 2^g$
\State $V = \{v_i^j: i = 0, \ldots, N-1; j = 0, \ldots, 2g \}$
\State $E = \{v_i^j \rightarrow v_{i+1}^j: i = 0, \ldots, N-2; j = 0, \ldots, 
2g\}$ 
\State $E = E \cup \{v_{n-1}^{j} \rightarrow v_{0}^{j+1}: j = 0, \ldots, 2g-1\}$
\State Let $\mathbf{B}$ be a binary representation of $\sigma$
\State Calculate $\mathfrak{B}=(\mathfrak{B}_0, \ldots, \mathfrak{B}_{g-1})=\traceTraj(\mathbf{B})$.
Let $\mathfrak{B}_{2g+1-m}=\mathfrak{B}_{m}, m=0,\ldots,g-2.$ Let $\mathfrak{B}_j=\mathfrak{B}_{j,0}\mathfrak{B}_{j,1}\ldots\mathfrak{B}_{j,2^g-1}$
\For{$i = 0$ to $2g$}
\EndFor
\For{$j = 0$ to $2g-1$}
  \For{$i = 0$ to $2^g-1$}
  \State $E = E \cup \{v_i^j \to v_{\pi_{\mathfrak{B}_{j,i}}}^{j+1}\} \cup \{v_i^j \to v_{\pi_{\bar{\mathfrak{B}}_{j,i}}}^{j+1}\}$ 
  \EndFor
\EndFor\\
\Return $\pi$
\end{algorithmic}
\end{algorithm}



\begin{algorithm}[H]
\caption{$\rifflePass(n, x, s, g, \lambda)$}\label{alg:init}
\begin{algorithmic}[1]
\Require  $s$ $\{\textsf{Salt}\}$, $g$ $\{\textsf{Garlic}\}$, $x$ 
$\{\textsf{Value to Hash}\}$, $\lambda$ 
$\{\textsf{Depth}\}$, $H$ $\{\textsf{Hash Function}\}$
\Ensure  $x$ $\{\textsf{Password Hash}\}$
\State $\sigma = \textsf{RiffleShuffle}_H(2^g, s)$
\State $G = (V, E) = \mathsf{GenGraph}(g, \sigma)$
\State $v_0^0 \leftarrow H(x)$
\For{$i:=1$ to $2^g-1$}
\State $v_i^0 = H(v_{i-1}^0)$ 
\EndFor
\For{$r:=1$ to $\lambda$}
  \For{$j:=0$ to $2g$}
    \For{$i=0$ to $2^g-1$}
    \State $v_i^{j+1} :=0$
      \ForAll{$v \rightarrow v_i^{j+1} \in E}$
	\State $v_i^{j+1} := H(v_i^{j+1}, v)$
      \EndFor
    \EndFor
  \EndFor     
      \For{$i=0$ to $2^g-1$}
	  \State $v_i^0=v_i^{2g+1}$
      \EndFor
\EndFor
\State $x := v_{n-1}^{2g}$\\
\Return $x$
\end{algorithmic}
\end{algorithm}

\section{Proofs}
\subsection{Proof of Lemma \ref{lem:result}}


In this Section  we present the security proof of our scheme in the sequential model of 
adversary.  Recall that $N=2^g$.
To prove Lemma \ref{lem:result} it is enough to prove the following theorem 
(since then 
the assertion follows from Theorem \ref{thm:superconcentrator}).

%

\begin{theorem}\label{thm:riffleConcentrator}
Let $\rho = (\rho_0, \ldots, \rho_{2^g-1}$) be a permutation of $N=2^g$ elements,
let $\mathbf{B}$ be its binary representation and let $\mathfrak{B}=
(\mathfrak{B}_0,\ldots,\mathfrak{B}_{g-1})=\traceTraj(\mathbf{B}
)$.
Let $\GG$ be an $N$-Double-Riffle-Graph using $\mathfrak{B}$. Then $\GG$ is an
$N$-Superconcentrator. 
\end{theorem}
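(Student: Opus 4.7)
The plan is to prove Theorem~\ref{thm:riffleConcentrator} by showing that the $N$-Double-Riffle-Graph built from the trajectory-traced binary words $\mathfrak{B}$ is isomorphic (as a layered DAG) to a Bene\v{s} network, and then invoke the classical fact that Bene\v{s} networks are $N$-Superconcentrators. The argument splits naturally into two halves: the upper $g$ layers (using $\pi_{\mathfrak{B}_j}$ and $\pi_{\bar{\mathfrak{B}}_j}$) should realize a butterfly, while the lower $g$ layers (using the inverse permutations) realize the reversed butterfly, and the two stacked together give a Bene\v{s} network.

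First I would analyse a single layer. Fix $j$ and $i$, and look at the two outgoing edges of $v_i^{j-1}$, namely to $v_{\pi_{\mathfrak{B}_{j-1}}(i)}^{j}$ and $v_{\pi_{\bar{\mathfrak{B}}_{j-1}}(i)}^{j}$. Directly from the definition of the Riffle-Permutation, one of these endpoints lies in the range $\{0,\ldots,HW(\bar{\mathfrak{B}}_{j-1})-1\}$ and the other lies in the complementary range of layer $j$. Hence the pair of outgoing edges from each vertex is really a ``switch'' that picks one of two halves of the next layer, controlled by a single bit $c_{j-1}\in\{0,1\}$. This is precisely the elementary primitive of a butterfly.

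Next I would prove, by induction on $j$, that the recursively-updated words produced by \traceTraj have the following semantics: for every input $v_i^0$ and every choice vector $c=(c_0,\ldots,c_{g-1})\in\{0,1\}^g$, the path whose switch at layer $\ell$ takes the $\pi_{\mathfrak{B}_\ell}$-edge if $c_\ell=0$ and the $\pi_{\bar{\mathfrak{B}}_\ell}$-edge otherwise arrives at a unique vertex $v_{f(i,c)}^{g}$ in the middle layer, and the map $(i,c)\mapsto f(i,c)$ is a bijection. The inductive step uses the key identity that the Riffle-Permutation induced by $\pi_{\mathfrak{B}_{\ell-1}^T}(B_\ell^T)$ is exactly the Riffle-Permutation one obtains after composing the previous Riffle-Permutations with the next bit of the target; this is precisely what the update rule $\mathfrak{B}_i=\pi_{\mathfrak{B}_{i-1}^T}(B_i)$ in Algorithm~\ref{alg:TraceTrajectories} encodes. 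This identification is the butterfly routability property. By the symmetric definition of the lower-half edges (using $\pi^{-1}_{\mathfrak{B}_j}$ and reflecting the layer index), reversing all edge directions in the lower half yields the same structure as the upper half, so the lower half is the reversed butterfly. The composition is therefore a Bene\v{s} network.

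Finally, I would cite (or re-prove by the standard recursive argument via Hall's marriage theorem) that a Bene\v{s} network is $N$-rearrangeable, hence is an $N$-Superconcentrator: given any $k$-subset $V_1$ of inputs and $k$-subset $V_2$ of outputs, extend an arbitrary bijection $V_1\to V_2$ to a permutation of $[N]$, and use rearrangeability to realize that permutation via $N$ vertex-disjoint paths; the $k$ paths restricted to $V_1,V_2$ are the required disjoint paths. The main obstacle I expect is Step~2, the bookkeeping that verifies that \traceTraj genuinely produces at layer $j$ a bit-pattern $\mathfrak{B}_{j,i}$ equal to the $j$-th routing bit of the element currently sitting at position $i$; without this invariant, the two outgoing edges at $v_i^{j-1}$ would still split into opposite halves of layer $j$, but the halves would not align across layers into a consistent butterfly ordering, and the Bene\v{s} identification would fail. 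Once this invariant is established, the remainder of the proof reduces to the classical Bene\v{s}/Valiant superconcentrator theorem.
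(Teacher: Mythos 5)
Your proposal stands or falls on the claim that the trajectory-traced graph is isomorphic to a Bene\v{s} network, and that claim is the gap. The single-layer analysis is right: since $r_{\mathfrak{B}_j}(i)=r_{\bar{\mathfrak{B}}_j}(i)$, each layer is a perfect matching of $2\times 2$ switches whose two outputs are always $\{r,\,r+N/2\}$, so every vertex does choose between the two halves of the next layer. But the \emph{input} pairs of the switches at level $j$ are dictated by the balanced word $\mathfrak{B}_j$, and nothing in the construction forces those pairings to respect the half-decomposition created by the previous level: a level-$(j{+}1)$ switch may pair two level-$j$ vertices lying in the same half, and its outputs still go to opposite halves of level $j{+}1$. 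For a genuine Bene\v{s} network, deleting the first and last switch stages disconnects the graph into two components of $N/2$ wires each --- this recursive decomposition is precisely what the Euler-partition/Hall's-theorem rearrangeability proof routes through. Here, already for $N=8$ one can exhibit balanced middle words (e.g.\ one pairing $\{0,2\},\{1,3\},\{4,5\},\{6,7\}$) for which the union of the switch input-pairings and the fixed $\{r,r+4\}$ output-pairings connects all eight wire-classes into a single component; since connectivity after deleting the outer stages is an isomorphism invariant, the graph is in general not a Bene\v{s} network and the classical rearrangeability theorem cannot be invoked. The invariant you yourself flag as the ``main obstacle'' is therefore not a bookkeeping issue: it genuinely fails, because \traceTraj only enforces consistency of the $N$ designated trajectories with $\rho$, not a global recursive butterfly ordering.

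The paper's proof avoids this entirely. It establishes only two weaker facts: Lemma~\ref{lem:reversedPaths} and Corollary~\ref{cor:permVertices} (following the designated trajectory bits, input $j$ reaches the middle-layer vertex indexed by the reversed bit string, and this correspondence is in bijection with permutations), and Lemma~\ref{lem:pairingButterfly} (no two trajectories can meet in a switch at two distinct levels --- proved by swapping the trajectory segments between the two alleged meetings to produce a second trajectory system inducing the same input--output correspondence, contradicting the bijection). Given $k$ inputs it then routes them along all distinct $\lceil\log_2 k\rceil$-bit prefixes followed by a common suffix: the first fact places the endpoints on a contiguous ``line'' of $k$ middle vertices, the second makes the paths vertex-disjoint, and the lower $g$ levels are handled by the mirror symmetry of the construction. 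To salvage your route you would have to prove rearrangeability of this randomized network directly, a strictly stronger statement than the superconcentrator property and one the paper never establishes.
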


Before we proceed to the main part of the proof, we shall introduce some auxiliary lemmas showing
some useful properties of the $N$-Double-Riffle-Graph.

%
\begin{lemma}
\label{lem:reversedPaths}
	Let $\rho$ be a permutation of $N = 2^g$ elements, $\rho = (\rho_0, \ldots, \rho_{2^g-1})$ and 
	let $\mathbf{B}$ be its binary representation. Let $\mathfrak{B}=\traceTraj(\mathbf{B})$.
	Let $\bar{\GG}$ be the subgraph
	of $N$-Double-Riffle-Graph $\GG$ constructed using $\mathfrak{B}$,
	consisting of $g+1$ layers and only of directed edges corresponding to the
	trajectories defined by $\rho$. Then the index of the endpoint of the directed path 
	from $j\mathrm{th}$
	input vertex $v_{j}^{0}$ of $\bar{\GG}$ is uniquely given by the reversal of the bit sequence
$\mathfrak{B}_j^T$. The input vertex $v_{j}^{0}$ corresponds to the output vertex $v_{k}^{g}$, where 
$k = \mathfrak{B}_j(2^g-1) \ldots \mathfrak{B}_j(1) \mathfrak{B}_j(0)$. 
\end{lemma}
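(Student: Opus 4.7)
For a token starting at $v_j^0$, write $p_0 = j$ and $p_{i+1} = \pi_{\mathfrak{B}_i}(p_i)$, and set $t_i = (\mathfrak{B}_i)_{p_i}$ for the trajectory bit ``read'' at layer $i$. Since each riffle permutation $\pi_{\mathfrak{B}_i}$ is a bijection of $\{0,\dots,2^g-1\}$, the composite map $j \mapsto p_g$ is a bijection; hence $\bar{\GG}$ induces a bijection from input to output vertices and the endpoint is well defined. The claim then reduces to showing $p_g = \sum_{i=0}^{g-1} t_i \cdot 2^i$, i.e.\ that the binary expansion of the endpoint (MSB first) is the reversed trajectory word $t_{g-1}t_{g-2}\cdots t_0$.

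The plan is to prove, by induction on $i$, the sharpened invariant that the top $i$ bits of $p_i$ (MSB first) equal $t_{i-1}, t_{i-2}, \ldots, t_0$. Since $\rho$ is a permutation of $\{0,\ldots,2^g-1\}$, each column $B_i$ of $\mathbf{B}$ has exactly $2^{g-1}$ of each bit, and hence so does $\mathfrak{B}_i$ (being a permutation of $B_i$). Consequently the balanced riffle formula gives $\pi_{\mathfrak{B}_i}(p) = t \cdot 2^{g-1} + r_{\mathfrak{B}_i}(p)$ with $t = (\mathfrak{B}_i)_p$, so the new MSB of $p_{i+1}$ is automatically $t_i$. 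For the remaining $i$ bits of $p_{i+1}$ to reproduce $t_{i-1},\ldots,t_0$, it suffices to show
\[
 r_{\mathfrak{B}_i}(p_i) \;\in\; \bigl[a \cdot 2^{g-i-1},\; (a+1)\cdot 2^{g-i-1}\bigr), \qquad a := (t_{i-1}t_{i-2}\cdots t_0)_2.
\]
By the inductive hypothesis, $p_i$ lies in the dyadic block $[a\cdot 2^{g-i}, (a+1)\cdot 2^{g-i})$. Splitting the defining count of $r_{\mathfrak{B}_i}(p_i)$ into contributions from blocks $0,1,\ldots,a-1$ and from block $a$ itself, the desired range follows from a ``balanced-block'' property of $\mathfrak{B}_i$: namely, that in every dyadic block of size $2^{g-i}$ the word $\mathfrak{B}_i$ contains exactly $2^{g-i-1}$ zeros and $2^{g-i-1}$ ones. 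Given this, the first $a$ blocks contribute exactly $a\cdot 2^{g-i-1}$ bit-$t_i$ positions, and block $a$ contributes strictly fewer than $2^{g-i-1}$ (since $p_i$ itself is one of the $2^{g-i-1}$ bit-$t_i$ positions of that block), pinning $r_{\mathfrak{B}_i}(p_i)$ in the required range. Taking $i=g$ then yields the endpoint formula.

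The crux, and the main obstacle, is establishing this balanced-block property. We prove it by a separate induction on $i$. The base case $\mathfrak{B}_0 = B_0$ is immediate. For the inductive step, write $\mathfrak{B}_i = \pi_{\mathfrak{B}_{i-1}}(B_i)$ and observe that $\pi_{\mathfrak{B}_{i-1}}$ sends the bit-$0$ (respectively, bit-$1$) positions of $\mathfrak{B}_{i-1}$ into $[0,N/2)$ (respectively, $[N/2,N)$) in their original order; consequently the $k$-th dyadic block of $\mathfrak{B}_i$ of size $2^{g-i}$ collects precisely the $B_i$-values at the bit-$c$ positions (for the appropriate $c\in\{0,1\}$) of a single block of $\mathfrak{B}_{i-1}$ of size $2^{g-i+1}$, and the inductive balanced-block hypothesis on $\mathfrak{B}_{i-1}$ guarantees that this collection has $2^{g-i}$ elements. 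To count how many of these carry $B_i = 1$ we must strengthen the inductive hypothesis to track which original token indices (via $\pi_{\mathfrak{B}_0}^{-1} \circ \cdots \circ \pi_{\mathfrak{B}_{i-1}}^{-1}$) populate each block, and then invoke the permutation property of $\rho$ to conclude that $B_i$ is balanced on the relevant index set. This strengthened hypothesis is the main technical ingredient; everything else is bookkeeping and direct application of the balanced riffle formula above.
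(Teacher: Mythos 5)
Your proof takes essentially the same route as the paper's: the paper's argument is a two-sentence sketch observing that the top layer sorts outputs by the last trajectory bit and that ``applying this kind of argument iteratively to the preceding layers will lead to the claim''; your forward induction on the top $i$ bits of $p_i$ is exactly that iteration, just run from the input end and written out. The added value of your write-up is that you make explicit the one fact the paper's sketch silently assumes, namely that each word $\mathfrak{B}_i$ is balanced on every dyadic block of size $2^{g-i}$ (without this, the riffle formula $\pi_{\mathfrak{B}_i}(p)=t\cdot 2^{g-1}+r_{\mathfrak{B}_i}(p)$ does not pin $r_{\mathfrak{B}_i}(p_i)$ into the correct sub-block and the induction stalls after the first bit). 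Be aware, though, that your own treatment of that balanced-block property is still only an outline --- the ``strengthened inductive hypothesis'' tracking which tokens populate each block is exactly where the content lives, and note that by the paper's recursion $\mathfrak{B}_i=\pi_{\mathfrak{B}_{i-1}}(B_i)$ only the \emph{single} permutation $\pi_{\mathfrak{B}_{i-1}}^{-1}$ (not the full composite $\pi_{\mathfrak{B}_0}^{-1}\circ\cdots\circ\pi_{\mathfrak{B}_{i-1}}^{-1}$ you invoke) relates positions of $\mathfrak{B}_i$ to positions of $B_i$, so the bookkeeping in that step needs care. Since the paper elides this step entirely, you have not introduced a gap relative to the paper --- you have merely made its implicit one visible.
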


The above lemma is very closely related to the interpretation of time-reversed Riffle Shuffle process
and it follows from a simple inductive argument. Let us focus on the last layer of $\bar{\GG}$, \ie the sets of
vertices $V_{g-1} = \{v_0^{g-1}, \ldots, v_{2^{g}-1}^{g-1}\}$, $V_{g} = \{v_0^{g}, \ldots, v_{2^{g}-1}^{g}\}$ and the
edges between them. Let us observe that from the construction of Riffle-Graph, the vertices from $V_{g}$ with indices whose 
binary representation starts with 1 are connected with the vertices from $V_{g-1}$ having the last bit of their trajectories
equal to 1 (the same applies to 0). Applying this kind of argument iteratively to the preceding layers will lead to the claim. 

Let us observe that choosing two different permutations $\rho_1$ and $\rho_2$ of $2^{g}$ elements will determine two different
sets of trajectories, each of them leading to different configurations of output vertices in $\bar{\GG}$ connected to
respective inputs from $V_{0} = \{v_0^{0}, \ldots, v_{2^{g}-1}^{0}\}$. Thus, as a simple consequence of
Lemma~\ref{lem:reversedPaths} we obtain the following corollary.
\begin{corollary}
	\label{cor:permVertices}
	There is one-to-one correspondence between the set $\mathbb{S}_2^{g}$ of permutations of $2^g$ elements determining the
	trajectories of input vertices in $N$-Double-Riffle-Graph and the positions of endpoint vertices from $V_{g}$ 
	connected with consecutive input vertices from $V_{0}$ in the graph $\bar{\GG}$ defined in Lemma \ref{lem:reversedPaths}.
\end{corollary}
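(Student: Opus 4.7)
The plan is to establish the one-to-one correspondence by proving that the map $\Phi : \rho \mapsto \tau_\rho$ is injective, where $\tau_\rho$ denotes the permutation sending each input index $j$ to the index of the output vertex in $V_g$ reached from $v_j^0$ along its trajectory in $\bar{\GG}$. Since both the domain and the codomain of $\Phi$ have cardinality $(2^g)!$, injectivity will immediately yield bijectivity. First I would observe that $\tau_\rho$ is indeed a permutation of $\{0,\ldots,2^g-1\}$: every layer of $\bar{\GG}$ restricts to a Riffle-Permutation, which is a bijection on $\{0,\ldots,2^g-1\}$, so the $2^g$ trajectories from $V_0$ to $V_g$ are vertex-disjoint and their endpoints exhaust $V_g$.

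To prove injectivity of $\Phi$, I would factor it through the composition
\[
\rho \;\longmapsto\; \mathbf{B} \;\longmapsto\; \mathfrak{B} \;\longmapsto\; \tau_\rho,
\]
and show that each arrow is injective. The first arrow is a bijection by the definition of the binary representation of a permutation. The second arrow, defined by \traceTraj, is injective because the $i$-th recursion step $\mathfrak{B}_i = \pi_{\mathfrak{B}_{i-1}^T}(B_i)$ only reorders the bits of the column $B_i$, and the permutation used depends solely on $\mathfrak{B}_{i-1}$; hence, starting from $\mathfrak{B}_0 = B_0$ and proceeding inductively, one recovers every column via $B_i = \pi^{-1}_{\mathfrak{B}_{i-1}^T}(\mathfrak{B}_i)$, so $\mathbf{B}$ is uniquely determined by $\mathfrak{B}$. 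Finally, by Lemma \ref{lem:reversedPaths}, the value $\tau_\rho(j)$ is obtained from the $j$-th row of $\mathfrak{B}$ by reading its bits in reverse order, so $\mathfrak{B}$ is in turn determined by $\tau_\rho$ via bit-reversal on each row.

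Composing the three injective maps yields injectivity of $\Phi$, and the cardinality equality $|\mathbb{S}_{2^g}|=(2^g)!$ then forces bijectivity, which is exactly the claimed one-to-one correspondence. The main obstacle is the invertibility step for \traceTraj: one has to check carefully that $\pi_{\mathfrak{B}_{i-1}^T}$ is a well-defined bijection of $\{0,\ldots,2^g-1\}$ that depends only on previously computed data and not on $B_i$ itself, so that the inversion formula $B_i = \pi^{-1}_{\mathfrak{B}_{i-1}^T}(\mathfrak{B}_i)$ is legitimate at every stage of the induction. Once this invariant is spelled out, the remainder of the argument is essentially bookkeeping together with a direct appeal to Lemma \ref{lem:reversedPaths}.
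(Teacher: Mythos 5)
Your proposal is correct and takes essentially the same route as the paper, which likewise derives the corollary from Lemma~\ref{lem:reversedPaths} by noting that distinct permutations determine distinct trajectory sets and hence distinct endpoint configurations. You simply make the two injectivity claims explicit (via the step-by-step invertibility of \traceTraj and the row-wise bit reversal) and add the cardinality argument for bijectivity, which the paper leaves implicit.
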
  

In order to provide a pebbling-based argument on memory hardness of the \rifflePass, we need the result on the structure
of $N$-Double-Riffle-Graph given by the following Lemma~\ref{lem:pairingButterfly}, which is somewhat similar to that of
Lemma~4 from \cite{Bradley2017}.
\begin{lemma}
\label{lem:pairingButterfly}
	Let $\GG$ be an $N$-Double-Riffle-Graph with $2^{g}$ input vertices. Then there is no such pairs of trajectories
	that if in some layer	$1 \leq i \leq g$ the vertices $v_{k}^{i-1}$ and $v_{l}^{i-1}$ on that trajectories are connected
	with some $v_{k'}^{i}$ and $v_{l'}^{i}$ (\ie they form a size-1 switch) on the corresponding trajectories, there is no
	$0 < j < i$ such that the predecessors (on a given trajectory) of $v_{k}^{i-1}$ and $v_{l}^{i-1}$ are connected with
	predecessors of $v_{k'}^{i}$ and $v_{l'}^{i}$.
\end{lemma}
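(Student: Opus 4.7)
The plan is to argue by contradiction: assume a pair of trajectories $(p,q)$ participates in size-1 switches at two distinct layers $i-1 \to i$ and $j-1 \to j$ with $1 \leq i < j \leq g$, and derive incompatible constraints on the destination bits of $p$ and $q$. The strategy is to extract from each switch a combinatorial invariant relating the first few destination bits of $p$ and $q$, and then show that the invariants obtained at two different layers cannot coexist.

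The first step is to characterize size-1 switches precisely. Each vertex $v_k^{i-1}$ has exactly two outgoing edges to layer $i$, landing on $v_{\pi_{\mathfrak{B}_{i-1}^T}(k)}^i$ and $v_{\pi_{\bar{\mathfrak{B}}_{i-1}^T}(k)}^i$. Every column of $\mathfrak{B}$ is balanced (contains exactly $N/2$ ones), since it is the image under a permutation of a column of $\mathbf{B}$, which in turn records one specific bit over the permutation $\sigma$ of $\{0,\ldots,N-1\}$. Unpacking the definitions of $\pi_B$ and $\pi_{\bar B}$ under this balance yields that $(v_k^{i-1}, v_l^{i-1})$ forms a size-1 switch if and only if $\mathfrak{B}_{i-1,k} \neq \mathfrak{B}_{i-1,l}$ and $r_{\mathfrak{B}_{i-1}^T}(k) = r_{\mathfrak{B}_{i-1}^T}(l)$ — in words, the two positions carry opposite bits of $\mathfrak{B}_{i-1}^T$ and sit at the same rank inside their respective bit-class.

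Next I would establish the key structural claim by induction on $i$: if $(p,q)$ forms a size-1 switch at layer $i-1 \to i$, then $B_{h,p} = B_{h,q}$ for every $0 \leq h \leq i-2$ and $B_{i-1,p} \neq B_{i-1,q}$. The induction leverages the trajectory-tracing identity $\mathfrak{B}_h[\pi_{\mathfrak{B}_{h-1}^T}(k)] = B_h[k]$ built into \traceTraj to follow how each destination bit migrates from layer to layer. Because $\sigma$ is a permutation, the map $p \mapsto (B_{0,p}, \ldots, B_{g-1,p})$ is a bijection onto $\{0,1\}^g$, so every prefix of bits is uniformly distributed; this guarantees that at layer $i-1$ the trajectories break into $2^{i-1}$ consecutive intervals of size $N/2^{i-1}$, indexed by the possible values of $(B_{i-2,\cdot}, \ldots, B_{0,\cdot})$, with each interval containing exactly $N/2^i$ trajectories having $B_{i-1} = 0$ and exactly $N/2^i$ having $B_{i-1} = 1$. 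The equal-rank requirement then forces the matched pair to lie in the same interval, which is precisely the bit-agreement conclusion.

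Combining the two facts finishes the argument: the first switch yields $B_{i-1,p} \neq B_{i-1,q}$, whereas the structural claim applied at layer $j-1 \to j$ yields $B_{h,p} = B_{h,q}$ for every $h \leq j-2$, and in particular at $h = i-1$ since $i-1 \leq j-2$. These two statements are incompatible. The main obstacle I expect is the inductive step of the structural claim, namely verifying that the successive Riffle-Permutations $\pi_{\mathfrak{B}_h^T}$ preserve and refine the interval partition of each layer indexed by destination bits; this is where the interaction between the tracing identity and the lexicographic order must be handled carefully, and once it is pinned down, the counting and contradiction steps are immediate.
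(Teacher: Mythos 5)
Your proof is correct, but it takes a genuinely different route from the paper's. The paper argues by contradiction using the two preceding results: given two size-1 switches on the same pair of trajectories at rows $j<k$, it swaps the middle segments of the two bit strings to form alternative trajectories $P_v^*,P_w^*$ that realize the same row-$0$-to-row-$g$ correspondence, and then contradicts either Corollary~\ref{cor:permVertices} (two distinct trajectory sets inducing the same endpoint configuration) or Lemma~\ref{lem:reversedPaths} (two distinct inputs forced to the same output). You instead prove a sharper, quantitative structural fact: since every column of $\mathfrak{B}$ is balanced, two vertices of a layer form a size-1 switch iff they carry opposite bits and equal ranks, and since after $i-1$ riffle steps the positions split into $2^{i-1}$ blocks indexed by the reversed prefix $(B_{i-2},\ldots,B_0)$, each containing exactly $N/2^{i}$ trajectories of each current bit value, equal rank forces the pair into the same block — hence agreement on $B_0,\ldots,B_{i-2}$ and disagreement on $B_{i-1}$. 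Two switches at layers $i<j$ then clash directly at bit $i-1$. Your characterization of switches (opposite bit, equal rank) and the block-counting step are both correct, and the final contradiction is immediate. What your approach buys is a self-contained and arguably more rigorous argument that pins down exactly which pairs switch at which layer and does not need Corollary~\ref{cor:permVertices}; what the paper's buys is brevity given the machinery already in place. The one step you flag as delicate — that the successive riffle permutations refine the block partition and that $\mathfrak{B}_{i-1}$ read at the current position of trajectory $p$ equals its $(i-1)$-st destination bit — is precisely the inductive content of Lemma~\ref{lem:reversedPaths}, so you could discharge it by citation rather than reproving it.
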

\begin{proof}
	Assume towards contradiction that such two trajectories exist and denote by $v$ and $w$ the input vertices on that trajectories.
	For the sake of simplicity we will denote all the vertices on that trajectories by $v^{i}$ and $w^{i}$, respectively, where
	$i$ denotes the row of $\mathcal{G}$. Let $j$ and $k$ be the rows such that $v^j, v^{j+1}, w^{j}, w^{j+1}$ and
	$v^k, v^{k+1}, w^{k}, w^{k+1}$ are connected forming two size-1 switches. Without loss of generality, let $j < k$.
	Define the trajectories of $v$ and $w$ as
	\begin{itemize}
		\item $P_v = b_0^v b_1^v \ldots b_{j-1}^{v} b b_{j+1}^{v} \ldots b_{k-1}^{v} b' b_{k+1}^{v} \ldots b_{g}^{v}$ and
		\item $P_w = b_0^w b_1^w \ldots b_{j-1}^{w} \bar{b} b_{j+1}^{w} \ldots b_{k-1}^{w} \bar{b'} b_{k+1}^{w} \ldots b_{g}^{w}$.
	\end{itemize}
	Consider another two trajectories, namely
	\begin{itemize}
		\item $P_v^{*} = b_0^v b_1^v \ldots b_{j-1}^{v} \bar{b} b_{j+1}^{w} \ldots b_{k-1}^{w} \bar{b'} b_{k+1}^{v} \ldots b_{g}^{v}$
		      and
		\item $P_w^{*} = b_0^w b_1^w \ldots b_{j-1}^{w} b b_{j+1}^{v} \ldots b_{k-1}^{v} b' b_{k+1}^{w} \ldots b_{g}^{w}$.
	\end{itemize}
	Clearly, $P_v \neq P_v^{*}$ and $P_w \neq P_w^{*}$. Moreover, one can easily see that $P_v$ and $P_v^{*}$ move $v^0$ to $v^g$,
	$P_w$ and $P_w^{*}$ move $w^0$ to $w^g$. Since other trajectories are not affected when replacing $P_u$ with $P_u^{*}$
	for $u \in \{v,w\}$, such replacement does not change the order of vertices in $g^\mathrm{th}$ row.
	If $P_v$ and $P_w$ differ only on positions $j$ and $k$, then $P_v = P_w^{*}$ and $P_w = P_v^{*}$, what leads to two
	different sets of trajectories resulting in the same correspondence between vertices from
	rows 0 and $g$. This contradicts Corollary \ref{cor:permVertices}. Otherwise, there exist another two trajectories
	$P_x = P_v^{*}$ and $P_y = P_w^{*}$. However, in such situation, from Lemma~\ref{lem:reversedPaths} it follows that 
	$x^{0}$ and $v^{0}$ should be moved to the same vertex $x^{g} = v^{g}$, but this is not the case
	(similar holds for $y$ and $w$). Thus, the resulting contradiction finishes the proof.
\end{proof}

Being equipped with the lemmas introduced above, we can now proceed with the proof of Theorem 
\ref{thm:riffleConcentrator}.


\begin{proof}[of Theorem \ref{thm:riffleConcentrator}]
We need to show that for every $k: 1 \leq k \leq G$ and for every pair of 
subsets $V_{in} \subseteq \{v_0^0, \ldots, v_{2^g-1}^0\}$ and $V_{out} \subseteq
\{v_0^{2g}, \ldots, 
v_{2^g-1}^{2g}\}$, where $|V_{in}| = |V_{out}| = k$ there are $k$ 
vertex-disjoint paths connecting vertices in $V_{in}$ to the vertices in  
$V_{out}$.

The reasoning proceeds in a similar vein to the proofs of Theorem~1 and Theorem~2 from~\cite{Bradley2017}.
Below we present an outline of the main idea of the proof.

Let us notice that it is enough to show that for any $V_{in}$ of size $k$ there exists $k$ 
vertex-disjoint paths that connect vertices of $V_{in}$ to the vertices of 
$V_{middle} \subseteq \{v_0^{g}, \ldots, 
v_{2^g-1}^{g}\}$ with vertices of $V_{middle}$ forming a line, \ie either: 
\begin{itemize}
 \item $V_{middle} = \{v_i^g, v_{i+1}^g, \ldots, 
v_{i+k-1}^g\}$ for some $i$,
 \item or $V_{middle} = \{v_0^g, \ldots, v_{i-1}^g\} \cup \{v_{2^g-k+i}^g, 
\ldots, v_{2^g-1}^g\}$.
\end{itemize}
If the above is shown, we obtain the claim by finding vertex-disjoint paths between $V_{in}$ and $V_{middle}$
and then from $V_{middle}$ to $V_{out}$ from the symmetry of $G$-Double-Riffle-Graph.

Fix $1 \leq k \leq G$ and let $V_{in}$ and $V_{out}$ be some given size-$k$ subsets of input and output vertices, 
respectively, as defined above.
From Lemma~\ref{lem:pairingButterfly} it follows that if some two vertices $x$ and $y$ are connected in $i^\mathrm{th}$
round forming a size-1 switch then they will never be connected again until round $g$. Thus, the paths from $x$ and $y$ can
move according to different bits in that round, hence being distinguished (they will share no common vertex). 
Having this in mind, we can construct the nodes-disjoint paths from $V_{in}$ to $V_{middle}$ in the following way.
Starting in vertices from $V_{in}$, the paths will move through the first $t = \left\lceil \log k \right\rceil$ rounds according to all
distinct $t$-bits trajectories. Then, after round $t$, we will choose some fixed $g-t$-sequence $\tau$ common for all $k$ paths
and let them follow according $\tau$. Lemma~\ref{lem:reversedPaths} implies that the $k$ paths resulting
from the construction described above after $g$ steps will eventually end up in some subset $V_{middle}$ of $k$ vertices forming a line,
whereas Lemma~\ref{lem:pairingButterfly} ensures that the paths are vertex-disjoint. 

\end{proof}
%

\section{Summary}
We presented a new memory hard function which can be used as a secure 
password-storing scheme. \rifflePass achieves better time-memory trade-offs 
than Argon2i and Balloon Hashing when the buffer size $n$ grows and the number 
of rounds $r$ is fixed (and the same as Catena-DBG).

\begin{figure}
\scriptsize
\begin{tabular}{l | r | r | r |  r | r}
 &$BHG_7$ & $BHG_3$ & Argon2i & Catena BFG & 
\rifflePass 
\\[2pt]
 \hline 
 
 Server - time T (for S = N) & $8 \lambda N$ & $4 \lambda N$ & $2 \lambda N$ & 
$4 
\lambda N$ & $3 \lambda N$\\[2pt]
 \hline
Attacker - time T ($S \leq \frac{N}{64}$) & 
$T \geq \frac{2^\lambda -1}{32 S}N^2$ & 
\multirow{2}{*}{$T \geq \frac{\lambda N^2}{32 S}$} & 
\multirow{2}{*}{$T \geq \frac{N^2}{1536S}$}& 
\multirow{2}{*}{$T \geq (\frac{\lambda N}{64 S})^\lambda N$} &
\multirow{2}{*}{$T \geq (\frac{\lambda N}{64 S})^\lambda N$} 
\\[2pt]
Attacker - time T ($\frac{N}{64} \leq S \leq \frac{N}{20}$) & 
 unknown & &
& 
&
\\[2pt]
\hline 
Salt-dependent graph  & yes & yes & yes & no & yes\\[2pt]
\end{tabular}
\caption{Comparison of the efficiency and security of BalloonHashing 
($BHG_3$ is BalloonHashing with $\delta = 3$ and $BHG_7$ is BHG graph for 
$\delta = 7$), Argon2i, Catena (with Butterfly graph) and 
\rifflePass (RSG).}
\end{figure}

On the other hand, in the case of massively-parallel adversaries the situation 
of Catena is much worse than \rifflePass. In Catena, only double-butterfly 
graph (Catena-DBG) offered a time-memory trade-off which depends on the number 
of layers of the graph. Unfortunately, Catena-DBG is just a single 
graph instance, so in theory an adversary may try to build a parallel hardware 
to achieve better trade-off. In the case of \rifflePass the time-memory 
trade-off is the same as for the Catena-DBG but for \rifflePass, a different 
salt corresponds (with high probability) to the computation on a different (one 
from $N!$) graph. So an 
adversary cannot build just a single device for cracking all passwords (as in 
the case of 
Catena).

\newcommand{\etalchar}[1]{$^{#1}$}

\appendix

\section{Markov chains, mixing times, strong stationary times}\label{sec:markov}
Consider an ergodic Markov chain $\X=\{X_k, k\geq 0\}$ on a finite state space 
$\E $ with stationary distribution $\pi$. Ergodicity implies that 
$d_{TV}(\mathcal{L}(X_k),\pi)\to 0$ (total variation distance) as $k\to\infty$, where $\mathcal{L}(X_k)$ is the distribution of the chain at step $k$. 
Define
$$\tau_{mix}(\varepsilon)=\inf\{k: d_{TV}(\mathcal{L}(X_k),\pi)\leq \varepsilon\},$$
often called  total variation \textsl{mixing time}. 
\smallskip\par 
\noindent\textbf{Perfect simulation and strong stationary times}.
\textsl{Perfect simulation} refers to the art of converting an algorithm
for running a Markov chain into an algorithm which returns an unbiased sample 
from its stationary distribution. \textsl{Coupling from the past} (CFTP)
is one of the most known ones. However, it is not applicable to our 
models. Another method is based on \textsl{strong stationary times} (SSTs), a 
method whose prime application is studying the rate of convergence 
(of a chain to its stationary distribution). 
\begin{definition}
A stopping time is a random variable $T\in\mathbb{N}$ such that the event $\{T=k\}$ depends only on $X_0,\ldots,X_k$. 
A stopping time $T$ is a \textbf{strong stationary time} (SST) if 
$$\forall(i\in\E) P(X_k = i | T = k ) = \pi(k).$$ 
\end{definition}
Due to \cite{Aldous1987},we have
\begin{equation}\label{eq:TV_sep_SST}
d_{TV}(\mathcal{L}(X_k),\pi)\leq Pr(T>k).
\end{equation}

\begin{remark}
\rm 
An SST $T$ can be \textsl{non-randomized}, i.e., an event $\{T=k\}$ depends 
only on on the path $X_0,\ldots, X_k$ and \textsl{does not} use any extra randomness
(which is the case for \textsl{randomized} SST). For non-randomized SST performing 
perfect simulation is relatively easy: simply run the chain $X_k$ until event $T$ occurs and stop.
Then $X_T$ has distribution $\pi$.
\end{remark}
\medskip\par 
\noindent\textbf{Obtaining a random permutation: SST for Riffle Shuffle.}
In our applications we will need a random permutation of $N$ elements. 
We can think of these elements as of cards, and of a chain on permutations as of card shuffling.
Consider the following shuffling scheme (one step of a corresponding Markov chain):
\begin{quote}
 \textsl{For each card in the deck, flip a coin and label the back of the card with 0 (if Heads occurred)
 or with 1 (if Tails occurred). Take all the cards labeled 1 out of the deck and put them 
 on the top {keeping} their relative ordering.}
\end{quote}
This is the \textsl{inverse Riffle Shuffle}. 
The following rule is an SST for this shuffling (due to Aldous and Diaconis \cite{Aldous1986}):
\begin{quote}
\textsl{Keep track of assigned bits (storing a sequence of assigned bits for each card). 
Stop when  all the sequences are different.}
\end{quote}
Once stopped, we have a  permutation obtained from \textsl{exactly} 
uniform distribution (which is the stationary distribution of this chain). 
On average, this takes $2 \log_2 N$ steps. What we considered was \textsl{an idealized model}, in a sense that if we use some shuffling scheme in cryptography,
we do not have an infinite sequence of random numbers. In reality we must obtain them in a deterministic way,
\eg from   a private key or a salt (depending on applications).
In our constructions, it depends on some salt $s$ and  some hash function $H$. 
The Algorithm \ref{alg:riffle} called $\textsf{RiffleShuffle}_H(n, s)$ takes as input a salt $s$, a hash function $H$ and the number of elements $n$.
It performs inverse Riffle Shuffle (with randomness obtained from salt $s$ and hash function $H$) until above SST event occurs.
Thus, its output is a random permutation of $\{1,\ldots,n\}$ (in the 
random-oracle model).


\section{Parallel attacks}
In this section we formulate lemmas which are used to prove Lemma 
\ref{lem:main_par}.
%

%
%

%
%

These lemmas follow  directly the proof of parallel security of Catena 
presented in~\cite{Alwen2017}. The proof technique is exactly the same as 
in~\cite{Alwen2017} since $\rsg$ is a Superconcentrator. The only difference 
comes from the fact that each stack of Catena (Butterfly) is built out of 
$2g-1$ layers while $\rsg$ is built from $2g$ layers.
First, let us collect some observations on $\rsg$.
\begin{lemma}\label{lem:par_rsg_uwagi}
Let $\lambda, n \in \mathbb{N}^{+}$ such that $n = n' (2 c \lambda +1)$ 
where $n' = 2^c$ for some $c \in \mathcal{N}^{+}$. Then the Riffle Shuffle 
graph $\rsg$ consists of a stack
$\lambda$ sub-graphs such that the following holds.
\begin{enumerate}
 \item The graph $\rsg$ has $n$ nodes in total.
 \item The graph $\rsg$ is built as a stack of $\lambda$ sub-graphs $\{G_i\}_{i 
\in [\lambda]}$ each of which is a Superconcentrator. In the unique topological 
ordering of $\rsg$ denote the first and final $n'$ nodes of each $G_i$ as 
$L_{i,0}$ and $L_{i,1}$ respectively. Then there is a path running through all 
nodes in each $L_{i,1}$.
\item Moreover, for any $i \in [\lambda]$ and subsets $S \subset L_{i,0}$ and 
$T \subset L_{i,1}$ with $|S| = |T| = h \leq n'$ there exist $h$ nodes-disjoint 
paths $p_1, \ldots, p_h$ of length $2c$ from $S$ to $T$.
\end{enumerate} 
\end{lemma}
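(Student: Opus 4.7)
My plan is to verify each of the three claims directly from the construction of the $N$-Double-Riffle-Graph and its stacking, together with the Superconcentrator result of Theorem \ref{thm:riffleConcentrator}.

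For claim (1), I would just count nodes in the stacked construction. A single $N$-Double-Riffle-Graph with $N = 2^c$ has $(2c+1)\,2^c$ nodes arranged in $2c+1$ layers of $2^c$ nodes each. When $\lambda$ copies are stacked by identifying the output layer of $G_i$ with the input layer of $G_{i+1}$, each successive graph contributes only $2c$ new layers, giving a total of $2^c + \lambda\cdot 2c\cdot 2^c = 2^c(2c\lambda+1) = n'(2c\lambda+1) = n$ nodes, as required.

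For claim (2), each building block $G_i$ is an instance of $N$-Double-Riffle-Graph produced by $\mathsf{GenGraph}$, so by Theorem \ref{thm:riffleConcentrator} it is an $N$-Superconcentrator. The sets $L_{i,0}$ and $L_{i,1}$ can be identified with the first (layer $0$) and last (layer $2c$) row of $G_i$ in the natural topological order generated by $\mathsf{GenGraph}$. The existence of a path through all of $L_{i,1}$ is immediate: the edge set of each $N$-Double-Riffle-Graph contains the sequential edges $v_{j-1}^{2c}\to v_j^{2c}$ for $j=1,\dots,2^c-1$ (put in by line 3 of Algorithm \ref{alg:genConcentrator}), which form precisely a Hamiltonian path on the final row.

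For claim (3), I fix $i\in[\lambda]$ and size-$h$ subsets $S\subset L_{i,0}$, $T\subset L_{i,1}$ with $h\le n'$. The Superconcentrator property from Theorem \ref{thm:riffleConcentrator} immediately supplies $h$ vertex-disjoint paths from $S$ to $T$ inside $G_i$, but gives no a priori control on their lengths. The key additional observation is that the proof of Theorem \ref{thm:riffleConcentrator} constructs those paths using only the diagonal ``Riffle'' edges $v_j^m\to v_{\pi_{\mathfrak B_m}(j)}^{m+1}$ and $v_j^m\to v_{\pi_{\bar{\mathfrak B}_m}(j)}^{m+1}$, and each such edge advances the layer index by exactly one; the paths therefore pass through each of the $2c+1$ layers exactly once. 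Consequently every path has length exactly $2c$, which is what part (3) requires.

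The only subtle step is this final length bookkeeping in claim (3): one has to trust that the Superconcentrator construction of Theorem \ref{thm:riffleConcentrator} can indeed be carried out using only cross-layer edges rather than the horizontal ``sequential'' edges within a row. This is however clear from the proof of that theorem, which routes $V_{\mathrm{in}}\to V_{\mathrm{middle}}$ by choosing distinct $t$-bit prefixes in the first $t = \lceil\log h\rceil$ layers and then a common $(c-t)$-suffix, and then symmetrically from $V_{\mathrm{middle}}\to V_{\mathrm{out}}$, so no horizontal edge is ever used. The remaining two parts are routine counting and an immediate appeal to the theorem.
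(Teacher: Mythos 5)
Your proof is correct and takes essentially the route the paper intends: the paper gives no explicit argument for this lemma (it only defers to the Catena analysis of Alwen \emph{et al.} and the structure of the Double-Riffle-Graph), and your node count, the sequential Hamiltonian path on the last row, and the observation that the superconcentrator routing of Theorem~\ref{thm:riffleConcentrator} uses only cross-layer edges (hence yields paths of the stated length) supply exactly the details being elided. The one quibble is purely notational: under the paper's own convention that a path $(v_1,\ldots,v_t)$ has length $t$ (vertices, not edges), a path traversing all $2c+1$ layers has length $2c+1$, but this off-by-one is already in the lemma statement and is not an error you introduced.
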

Using Lemma \ref{lem:par_rsg_uwagi}  and  following the proof of Lemma 6 in \cite{Alwen2017} we  obtain the following lemma.
\begin{lemma}\label{lem:par_D}
Let $\lambda, n \in \mathbb{N}^{+}$ be such that $n = n' (2 \lambda c + 1)$ with 
$n' = 2^c$ for some $c \in \mathcal{N}^{+}$.
 Then it holds that $\rsg \in \mathcal{D}_{1, g}^{\lambda, n'}$ for $g = \lceil 
\sqrt{n'}\rceil$
\end{lemma}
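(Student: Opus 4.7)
The plan is to witness the stacked-dispersed-graph membership directly, using the subsets supplied by Lemma \ref{lem:par_rsg_uwagi}. I set $L_i := L_{i,1}$, namely the final $n'$ nodes of the $i$-th Superconcentrator stage of $\rsg$. These are $\lambda$ pairwise disjoint subsets, each of size $k = n'$. Clause (1) of the stacked-dispersed-graph definition is then immediate: by Lemma \ref{lem:par_rsg_uwagi}(2) a single directed path runs through all of $L_{i,1}$ for every $i$.

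For clause (2), I would fix the natural topological ordering of $\rsg$ that lists the Superconcentrators in order and, within each, orders nodes layer by layer from $L_{i,0}$ up to $L_{i,1}$. Under this ordering, the last $n'$ nodes of the cumulative prefix $G_i$ (the sub-graph of $\rsg$ up to the last node of $L_i$) are exactly $L_{i,1}$. Partition these into the $\lfloor n'/g\rfloor$ consecutive intervals $L_j$ of length $g = \lceil\sqrt{n'}\rceil$. For each such $L_j$ I must produce $g$ node-disjoint paths of length at least $g$, all ending in $L_j$. Lemma \ref{lem:par_rsg_uwagi}(3), applied with $h = g$, supplies $g$ node-disjoint paths connecting a chosen $g$-subset of $L_{i,0}$ to $L_j$; these paths have length $2c$, terminate in distinct nodes of $L_j$, and are vertex-disjoint by construction, so the only thing left to check is the length bound.

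The main obstacle is this length bound. When $2c \geq \lceil 2^{c/2}\rceil$ the Superconcentrator-provided paths already have length at least $g$ and we are done. When $g$ exceeds $2c$, I would extend each of the $g$ paths backward while preserving vertex-disjointness in two ways: (a) using the horizontal sequential edges $v_i^j \to v_{i+1}^j$ inside each row, where different paths can be pushed along disjoint segments of a line of length $n' \gg g$; and (b) when a single sub-graph does not supply enough length, chaining the paths backward into $G_{i-1}$ via a second application of Lemma \ref{lem:par_rsg_uwagi}(3) on that sub-graph, adding another $2c$ to each path, and iterating at most $\lceil g/(2c)\rceil$ times. The genuinely delicate case is $i = 1$, where no earlier sub-graph is available; the required additional length must then come entirely from the horizontal lines within the first sub-graph, which is feasible because each row contains $n'$ nodes while only $g$ disjoint extensions are needed.

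Once $(1,g,n')$-dispersion of every cumulative prefix $G_i$ has been verified in this way, the conclusion $\rsg \in \mathcal{D}_{1,g}^{\lambda,n'}$ is immediate from the definition of stacked dispersed graphs, mirroring the proof of Lemma 6 in \cite{Alwen2017} with $2c$ in place of Catena's $2c-1$ internal layers.
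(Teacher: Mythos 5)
Your proposal is correct and follows essentially the same route as the paper, which simply invokes Lemma~\ref{lem:par_rsg_uwagi} and defers to the proof of Lemma~6 in \cite{Alwen2017}; you in fact supply the backward path-lengthening detail that the paper leaves implicit in that citation. One small imprecision: the claim $n'\gg g$ is not right --- with $n'=2^c$ and $g=\lceil\sqrt{n'}\,\rceil$ one has $n'\approx g^2$, so the $g$ disjoint horizontal segments of length about $n'/g\approx g$ consume essentially the entire input row, which is exactly (and only just) enough; combined with the additional $2c$ layers of the Superconcentrator paths this meets the length bound $g$, so your fallback (b) of chaining into $G_{i-1}$ is never needed (fortunately, since, as you note, it is unavailable for $i=1$).
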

\noindent
The above Lemma \ref{lem:par_rsg_uwagi} together with Theorem \ref{thm:thm6} imply  Lemma 
\ref{lem:main_par}.

%
%
%

%
%

\end{document}